\documentclass[a4paper,UKenglish,cleveref, autoref, thm-restate]{lipics-v2021}

\usepackage{latexsym}
\usepackage{amsmath,amssymb,mathtools}
\usepackage{ifthen}
\usepackage{mathrsfs}
\usepackage{graphics}

\usepackage{stmaryrd}
\usepackage{amsthm}
\usepackage{arcs}
\usepackage{tikz}
\usepackage{hyperref, cleveref}

\newtheorem{theo}{Theorem}[section]
\newtheorem{lem}[theo]{Lemma}
\newtheorem{cor}[theo]{Corollary}

\theoremstyle{definition}
\newtheorem{defn}[theo]{Definition}

\newcommand{\NN}{{\mathbb N}}

\newcommand{\pclique}{\#\textnormal{\textsc{Clique}}}

\newcommand{\pmclique}{\#\textnormal{\textsc{ColorfulClique}}}

\newcommand{\pcolindsub}[1]{\#\textnormal{\textsc{ColorfulIndSub}}(#1)}
\newcommand{\pindsub}[1]{\#\textnormal{\textsc{IndSub}}(#1)}

\newcommand{\slice}[2]{#1^{(#2)}}

\newcommand{\asymm}{\Phi_{\sf asym}}

\DeclareMathOperator{\Aut}{Aut}

\newcommand{\sharpwone}{\textnormal{\textsf{\#W[1]}}}

\title{Counting Small Induced Subgraphs:\\
Hardness of Symmetry-Based Properties} %TODO Please add

\titlerunning{Counting Small Induced Subgraphs: Symmetry-Based Properties} %TODO optional, please use if title is longer than one line

\author{Radu Curticapean}{University of Regensburg, Germany \and IT University of Copenhagen, Denmark \and \url{https://www.uni-regensburg.de/informatik-data-science/fakultaet/einrichtungen/algorithmen-und-komplexitaetstheorie/team/radu-curticapean} }{radu.curticapean@ur.de}{https://orcid.org/0000-0001-7201-9905}{}%TODO mandatory, please use full name; only 1 author per \author macro; first two parameters are mandatory, other parameters can be empty. Please provide at least the name of the affiliation and the country. The full address is optional. Use additional curly braces to indicate the correct name splitting when the last name consists of multiple name parts.

\author{Mingjun Liu}{University of Regensburg, Germany}{mingjun.liu@ur.de}{https://orcid.org/0009-0007-4177-6496}{}

\authorrunning{R.~Curticapean and M.~Liu} %TODO mandatory. First: Use abbreviated first/middle names. Second (only in severe cases): Use first author plus 'et al.'

\Copyright{Radu Curticapean and Mingjun Liu} %TODO mandatory, please use full first names. LIPIcs license is "CC-BY";  http://creativecommons.org/licenses/by/3.0/

\ccsdesc[500]{Theory of computation~Fixed parameter tractability}
\ccsdesc[300]{Mathematics of computing~Graph algorithms}

\keywords{induced subgraphs, counting complexity, parameterized complexity, dichotomy, automorphism group}

\category{} %optional, e.g. invited paper

\relatedversion{} %optional, e.g. full version hosted on arXiv, HAL, or other respository/website
\acknowledgements{We thank Simon Döring and Daniel Neuen for discussions on the subject.}%optional

\funding{\flag{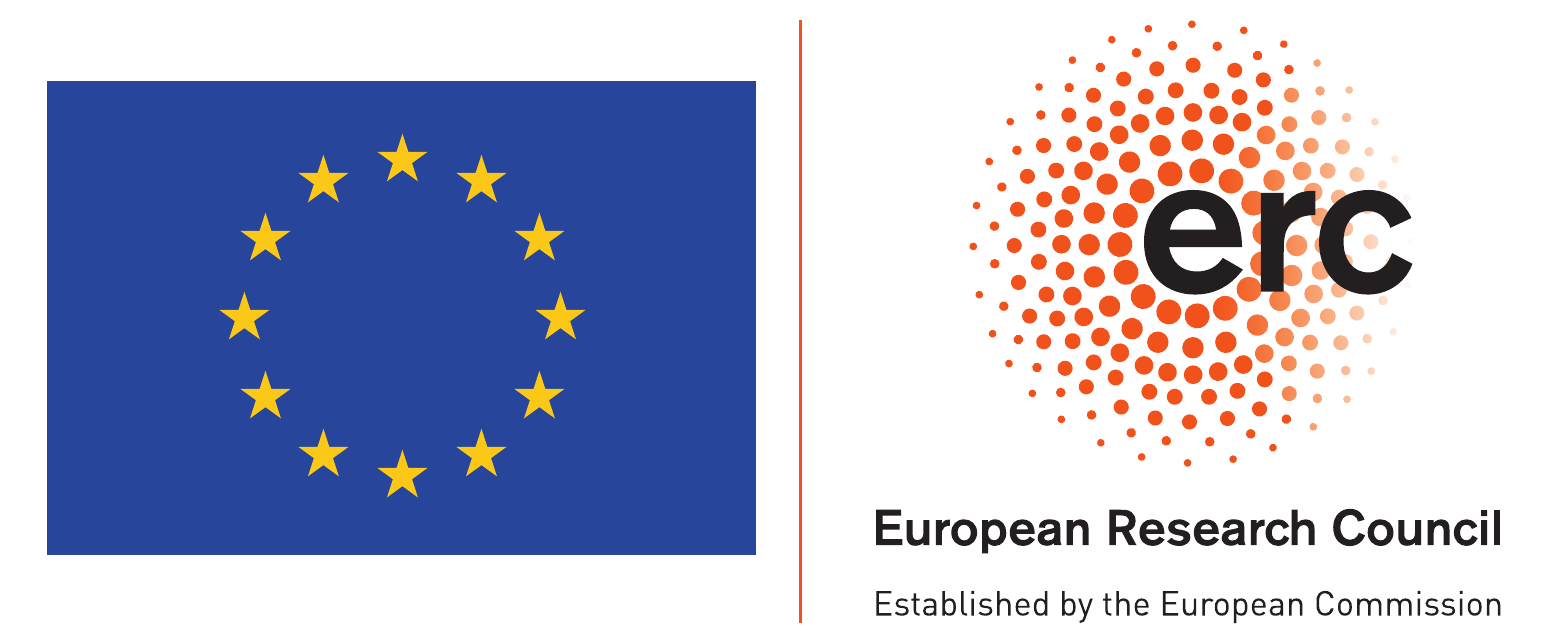}Research funded by the European Union (ERC, CountHom, 101077083).
Views and opinions expressed are those of the author(s) only and do not necessarily reflect those of the European Union or the European Research Council Executive Agency. %
Neither the European Union nor the granting authority can be held responsible for them.}

\EventEditors{John Q. Open and Joan R. Access}
\EventNoEds{2}
\EventLongTitle{42nd Conference on Very Important Topics (CVIT 2016)}
\EventShortTitle{CVIT 2016}
\EventAcronym{CVIT}
\EventYear{2016}
\EventDate{December 24--27, 2016}
\EventLocation{Little Whinging, United Kingdom}
\EventLogo{}
\SeriesVolume{42}
\ArticleNo{23}
\usepackage{todonotes}

\nolinenumbers

\begin{document}

\date{}
\maketitle

\begin{abstract}
    Jerrum and Meeks (TOCT, JCSS 2015) introduced the counting problems $\pindsub{\Phi}$ for fixed graph properties $\Phi$:
    Given an input graph $G$ and $k\in\NN$, count the $k$-vertex subsets $S \subseteq V(G)$ such that the induced subgraph $G[S]$ satisfies $\Phi$. 
    For recursively enumerable $\Phi$, it is known that $\pindsub{\Phi}$ is either \sharpwone-hard or fixed-parameter tractable. A direct classification depending on $\Phi$ however still remains open.
    
    In particular, the status was open for the property of graphs without nontrivial automorphisms, also mentioned in a very recent survey on parameterized counting by Roth (Comput.~Sci.~Rev.~2026).
    This is a natural property that evades all currently known techniques for proving \sharpwone-hardness, including a general toolkit based on Fourier analysis that was very recently introduced by Curticapean and Neuen (SODA~2025). 
    In this paper, we show that counting induced $k$-vertex graphs without nontrivial automorphisms is \sharpwone-hard by constructing ``clique scaffolds'', i.e., problem-specific restrictions of the property that enable a reduction from the $k$-clique problem.
    
    More generally, we show that for every finite group $Q$, counting $k$-vertex induced subgraphs with automorphism group $Q$ is \sharpwone-hard.
\end{abstract}

\section{Introduction}

Pattern counting is an important routine in network analysis~\cite{MiloSIKCA02}:
Given a small fixed pattern $H$ and a large host graph $G$, this problem asks us to determine the number of $H$-occurrences in $G$. More formally, we wish to count (induced) subgraphs of $G$ that are isomorphic to $H$. 
The complexity of such problems has been fully classified for individual fixed patterns $H$:
\begin{itemize}
    \item Given a $k$-vertex graph $H$, we can count \emph{induced} $H$-copies in an $n$-vertex graph in time $n^{\omega k/3 + O(1)}$~\cite{Nesetril1985OnTC}, where $\omega \leq 2.372$ is the exponent of matrix multiplication~\cite{DBLP:conf/soda/AlmanDWXXZ25}. Under the exponential-time hypothesis ETH, there is a constant $\alpha >0$ such that no $O(n^{\alpha k})$ time algorithm exists for the problem~\cite{ChenCFHJKX05}. In other words, the concrete structure of the pattern $H$ is irrelevant for this problem, only the number of vertices matters.
    \item Given a graph $H$ of vertex-cover number $t$, we can count the (not necessarily induced) $H$-subgraphs in $G$ in time $n^{t+O(1)}$, and ETH implies a constant $\alpha >0$ such that no $O(n^{\alpha t / \log t})$ time algorithm exists for the problem~\cite{CurticapeanM14,CurticapeanDM17}. Note that $t$ may be much smaller than the number of vertices in some graphs $H$.
\end{itemize}

\subsection{Counting Small Patterns With a Fixed Property}
In some applications, we may not be interested in concrete fixed patterns $H$, but rather in counting occurrences of \emph{all} patterns that satisfy a given graph property $\Phi$. For example, we may wish to determine how many $k$-vertex subsets $S$ induce a graph with at least $75\%$ of the $k \choose 2$ possible edges. This naturally leads to the problem of counting $k$-vertex subsets $S$ of $V(G)$ such that $G[S]$ satisfies a fixed property $\Phi$. Other examples include the problems of counting the induced $k$-vertex subgraphs that are connected, planar, or satisfy other application-specific properties.

Such problems can certainly be solved by counting induced $H$-copies for all individual $H \in \Phi$ and summing these counts up. For some properties $\Phi$, however, this is clearly not the best strategy: If $\Phi$ contains all graphs, then there are exactly $n \choose k$ induced $k$-vertex subgraphs satisfying $\Phi$. In particular, there is no need to determine the (hard) individual counts arising from specific graphs $H\in \Phi$. 
A very recent paper also presented \emph{nontrivial} properties $\Phi$ for which the total count is easier to determine than the individual pattern counts~\cite{CurticapeanDN25}. 

This renews an intriguing question that has been open for over a decade: For which properties $\Phi$ can we count $k$-vertex induced subgraphs satisfying $\Phi$ in significantly less than $n^{O(k)}$ time, or even in $O(n^d)$ time for an exponent $d=d_\Phi$ that does not depend on $k$?

\subsection{Known Results}

This question can be formalized well in the language of \emph{parameterized complexity}, where a \emph{fixed-parameter tractable} running time $f(k)n^{O(1)}$ for problems with instance size $n$ and parameter value $k$ is considered efficient.
Some parameterized problems are believed to be hard, e.g., the canonical \sharpwone-hard problem $\pclique$ that asks to count $k$-cliques in $n$-vertex graphs $G$~\cite{FlumG06}. The exponential-time hypothesis rules out $n^{o(k)}$ time algorithms for this problem~\cite{ChenCFHJKX05}. 

Within parameterized complexity, our induced subgraph counting problem leads to the following problem $\pindsub{\Phi}$ for fixed graph properties $\Phi$, originally defined by Jerrum and Meeks~\cite{JerrumM15,JerrumM15b}: On input a graph $G$ and the parameter $k\in \mathbb N$, count the $k$-vertex subsets $S$ such that $G[S]$ satisfies $\Phi$.
In principle, a dichotomy for $\pindsub{\Phi}$ is known: For every recursively enumerable property $\Phi$, this problem is either fixed-parameter tractable or \sharpwone-hard, with no intermediate cases~\cite{CurticapeanDM17}.
However, the dichotomy criterion on $\Phi$ presented in~\cite{CurticapeanDM17} relies on specific equations that \emph{a priori} have no combinatorial meaning. Consequently, checking the criterion even for very natural properties $\Phi$ required nontrivial effort: Several highly technical papers and more recent streamlined results established hardness of $\pindsub{\Phi}$ for important classes of properties, e.g., when
\begin{enumerate}
\item $\Phi$ is closed under edge-deletions~\cite{CN,DoringMW24},
\item $\Phi$ holds only for $(2-\epsilon)^m$ graphs for $m = {k \choose 2}$, e.g., when $\Phi$ is nontrivial and closed under vertex-deletions~\cite{CN,FockeR24},
\item $\Phi(H)$ depends only on the number of edges in $H$~\cite{CN,RothSW24}, or
\item the $k$-vertex graphs in $\Phi$ avoid $\omega(k)$ of the ${k \choose 2}+1$ possible edge-densities~\cite{CN,RothSW24}.
\end{enumerate}

A full classification of the problems $\pindsub{\Phi}$ with an easily checkable criterion on $\Phi$ however remains open~\cite{CN,CurticapeanDN25}.

\subsection{Counting Asymmetric Graphs}

Even the status of several natural graph properties $\Phi$ remained unresolved, e.g., the property of \emph{asymmetric} graphs, which have only the trivial automorphism:
\[
\asymm \coloneqq \{H\mid \Aut(H)\text{ is trivial}\}.
\]

The complexity status of $\pindsub{\asymm}$ is mentioned explicitly as an open problem in a recent survey on parameterized counting problems by Roth~\cite{Roth25}. Curiously, this property eschews all the general hardness results above, e.g., it is not monotone under edge- or vertex-deletions, almost all graphs are asymmetric~\cite{ER63}, and we show in Lemma~\ref{lem:avoid} in the appendix that $\asymm$ avoids only $O(k)$ instead of the required $\omega(k)$ edge-densities.
Additionally, $\asymm$ is problematic for the group-theoretic techniques developed in previous works~\cite{DorflerRSW22,DoringMW24,DMW25}: These techniques ultimately rely on graphs $H$ with $\Phi(H) \neq \Phi(\overline H)$, but $H$ and $\overline H$ have the same automorphism group and are thus not distinguished by $\asymm$.

This establishes the asymmetry property as a very natural bottleneck for the known hardness techniques for $\pindsub{\Phi}$ and makes counting asymmetric induced subgraphs a useful benchmark problem: any hardness proof requires a strategy that does not rely on monotonicity, density restrictions, or group-theoretic techniques that require comparisons with the complement. 

\subsection{Our Results}

We show that counting induced $k$-vertex subgraphs without symmetries is $\sharpwone$-hard and provide a tight lower bound based on hardness of counting $k$-cliques.
\begin{theo}\label{thm:main-asym}
    The problem $\pindsub{\asymm}$ is $\sharpwone$-hard.
    Moreover, for every $c>0$ and $k$, an $O(n^{ck})$ time algorithm for $\pindsub{\asymm}$ implies an $O(n^{4ck})$ algorithm for $\pclique$.
\end{theo}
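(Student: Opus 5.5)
We reduce from $\pmclique$, the colorful variant of $\pclique$: given a graph $G$ whose vertices are partitioned into color classes $V_1,\dots,V_k$, count the $k$-cliques with exactly one vertex per class. This problem is interreducible with $\pclique$ at no loss in the exponent. The reduction goes through a colored intermediate problem $\pcolindsub{\asymm}$ and a ``clique scaffold''. For the scaffold we fix a carefully chosen asymmetric graph $F$ on $\ell \le 4k$ vertices, containing a designated set $X=\{x_1,\dots,x_k\}$ of pairwise non-adjacent ``slot'' vertices. It should have the following property: if $F'$ arises from $F$ by adding an arbitrary subset of edges inside $X$, then $F'$ is asymmetric if and only if $X$ becomes a clique. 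Equivalently, every proper edge pattern on $X$ must create a nontrivial automorphism.

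That last requirement looks too strong as stated. I would therefore weaken it to what interpolation needs: $\Phi$ restricted to the family $\{F+E' : E'\subseteq \binom X2\}$ should be a function whose Möbius/Fourier transform has a nonzero coefficient on the full set $\binom X2$. By the standard machinery of~\cite{CurticapeanDM17}, this coefficient corresponds to the colorful $k$-clique count. Concretely, I would attach to each $x_i$ a pendant path of distinct length $i$ (or a gadget tree $T_i$ that is pairwise non-isomorphic and rigid), so that $F$ has all its vertices fixed except through edges in $X$. I would then add a ``twin'' structure so that symmetries appear exactly when some pair $x_i,x_j$ is non-adjacent. The cleanest option I see is inclusion–exclusion over pairs: the function $E'\mapsto [F+E' \text{ asymmetric}]$ need only have a nonzero top coefficient, which I would verify by a parity/sign argument over the lattice of edge subsets.

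The reduction itself runs as follows. Given a colored $G$, we build $G'$ by replacing each color class $V_i$ with $V_i$ joined to a private copy of the gadget $T_i$. Edges between different gadgets follow $F$, and edges between classes are those of $G$. Colorful $\ell$-subsets of $G'$ that map onto $F$ then correspond exactly to choices $v_i\in V_i$, with the $X$-edges given by $G[\{v_1,\dots,v_k\}]$. We would then do two things. First, we remove the color restriction by the usual inclusion–exclusion over subsets of colors together with the fact that $\Phi$ is isomorphism-invariant; the gadgets being rigid and distinguishable makes uncolored counts recover colored ones. Second, we isolate the clique term by interpolation: we repeat the construction over the $2^{\binom k2}$-many edge-deletion variants of the inter-class edges and solve the resulting linear system, whose invertibility is exactly the nonzero top coefficient. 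The pattern size is $\ell\le 4k$ and the graph size is $O(n)$, so an $O(n^{c\ell})$ algorithm yields $O(n^{4ck})$ for $\pclique$, matching the stated bound.

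The main obstacle is designing $F$ and proving the nonzero top coefficient. Asymmetry is a global condition, and adding edges among slots can both create and destroy automorphisms. My first attempt would make each $x_i$ carry a distinct rigid tail together with a mirrored copy $x_i'$ with an identical tail, connected so that the swap $x_i\leftrightarrow x_i'$ is an automorphism unless $x_i$ has all its $X$-edges. Only the complete pattern would then break all swaps. If this forces too many vertices, I would fall back to the $4k$ budget by sharing tails, which is presumably where the factor $4$ comes from.
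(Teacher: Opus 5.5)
\paragraph{Verdict: the reduction framework is sound, but the scaffold is missing.} Your framework is essentially fine. It uses a singleton color class for each gadget vertex and inclusion--exclusion over colors. If you only know that the top M\"obius coefficient $\hat\Phi(K)$ of the restricted function is nonzero, you add the alternating sum over the graphs obtained by deleting all edges between the class pairs in $P\subseteq\binom{[k]}{2}$. This does isolate $\hat\Phi(K)$ times the number of colorful $k$-cliques. However, the actual content of the theorem is the scaffold, and you do not have one.

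\paragraph{Why your scaffold ideas fail.}
\begin{itemize}
\item Your requirement is inconsistent as stated. For $E'=\emptyset$ you get $F'=F$, which you chose asymmetric, although $X$ is not a clique.
\item Pinning every slot by a pendant path of distinct length, as you intend, makes every $F+E'$ asymmetric. The restricted function is then constant and its top coefficient vanishes.
\item In the twin design, $x_i'$ has a neighbourhood independent of $E'$. So the swap $x_i\leftrightarrow x_i'$ (with tails) is an automorphism for at most one pattern of $X$-edges at $x_i$, not for all incomplete ones as you claim.
\item Tails of lengths $1,\dots,k$ cost $\Theta(k^2)$ vertices. So neither $\ell\le 4k$ nor the exponent $4ck$ is supported.
\item The ``parity/sign argument'' is never specified.
\end{itemize}

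\paragraph{The missing idea: reverse the polarity.} Asymmetry is invariant under complementation. So your requirement amounts to a rigid graph in which \emph{every} nonempty set of free edges \emph{creates} a symmetry, which is the hard direction. The paper instead starts from a graph whose only nontrivial automorphism is an involution. This involution maps each free vertex to a vertex outside the free set, so any nonempty set of free edges \emph{destroys} it.

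\paragraph{The paper's construction, in your notation.}
\begin{itemize}
\item Take the $2\times 2k$ ladder with the first rung removed, on $4k$ vertices, with free vertices $C=\{(1,2i)\mid i\in[k]\}$. After adding free edges, the two pendant vertices are still the only vertices of degree one.
\item If an automorphism fixes them, propagate column by column. This uses $N((1,j-1))\cap N((2,j))=\{(1,j),(2,j-1)\}$, which holds regardless of free edges, and shows the automorphism is the identity.
\item If it swaps them, it must be the vertical reflection. The reflection maps $(1,2j)$ to $(2,2j)$, whose degree is unaffected by free edges, so no free edge can be present.
\end{itemize}
Hence the restriction of $\asymm$ is exactly a Boolean OR: the graph is symmetric if and only if no free edge is present.

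Now place the edges of $\overline G$ between the classes. Then ``no free edge'' is equivalent to ``the chosen vertices form a clique in $G$''. The number of colorful $k$-cliques is therefore $\prod_i|V_i|$ minus the number of colorful asymmetric $4k$-sets. No interpolation or Fourier argument is needed, and the pattern size $4k$ yields the exponent $4ck$.
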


We also prove a more general result:
Say that a graph $H$ \emph{realizes} a group $Q$ if the automorphism group $\Aut(H)$ is isomorphic to $Q$. By Frucht's theorem, every finite group $Q$ is realized by some graph $H$.\footnote{In fact, $Q$ is realized by infinitely many non-isomorphic graphs $H$. For example, the group $Q=S_3$ is realized by $K_3$ or $K_{1,3}$, or by any uniform subdivision of $K_{1,3}$.}
For a fixed finite collection of finite groups $\mathcal Q$, we then define 
\[
\Psi_{\mathcal Q} \coloneqq \{H\mid H\text{ realizes some }Q\in \mathcal Q\}.
\]

An important special case is of course $\mathcal Q = \{Q\}$ for a single group $Q$.
In particular, we have $\Psi_{\{S_1\}} = \asymm$, where $S_1$ is the trivial group.

\begin{theo}
\label{thm:main-groupcollection}
    For every fixed collection of finite groups $\mathcal Q$ with $0 < |\mathcal Q| < \infty$, the problem $\pindsub{\Psi_{\mathcal Q}}$ is $\sharpwone$-hard.
    Moreover, there is a constant $\alpha = \alpha_\mathcal Q$ such that, for every $c>0$ and $k$, an $O(n^{ck})$ time algorithm implies an $O(n^{\alpha ck})$ algorithm for $\pclique$.
\end{theo}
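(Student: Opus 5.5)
We plan to reduce from the colorful clique problem $\pmclique$: given $G$ whose vertex set is partitioned into $k$ color classes $V_1,\dots,V_k$, count the $k$-cliques with one vertex from each class. This problem is interreducible with $\pclique$ with only a polynomial blowup in the exponent. Following the ``clique scaffold'' idea from the abstract, we first fix a group $Q\in\mathcal Q$ of largest order among the groups in $\mathcal Q$. For this $Q$ we construct, for every large $k$, a scaffold graph $F_k$. The vertices of $F_k$ are split into $k$ designated \emph{slots} $s_1,\dots,s_k$ and a rigid remainder. The remainder combines a Frucht-type gadget that realizes $Q$ with pairwise distinguishable asymmetric ``tails'' attached to the slots, for example paths of distinct lengths or attached trees of distinct sizes. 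The intended edges among the slots form a clique. We design $F_k$ so that $\Aut(F_k)\cong Q$, and so that deleting any nonempty set of slot-to-slot edges yields a graph $F'$ with $\Aut(F')$ strictly larger than $Q$ or not isomorphic to any group in $\mathcal Q$. A concrete way to force this is to double each slot-to-slot edge through a small gadget: if the edge is missing, the tails at its two endpoints become interchangeable, which creates an extra involution. The order of the automorphism group then exceeds $\max_{Q'\in\mathcal Q}|Q'|$, or else the group contains an element outside the allowed isomorphism types.

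Next we build a host graph $G^\ast$ from $G$. Each colored vertex $v\in V_i$ is blown up into a copy of slot $s_i$. We add one shared copy of the remainder of $F_k$, joined to the blown-up vertices as prescribed by $F_k$. Between blown-up vertices $u\in V_i$ and $v\in V_j$ with $i\ne j$, we place the slot-to-slot edge gadget exactly when $uv\in E(G)$. The size of $F_k$ should be linear in $k$ (say at most $4k$ vertices in the asymmetric case, which would give the exponent $4ck$), so that an $O(n^{ck'})$ algorithm at $k'=|V(F_k)|$ yields the stated bound for cliques. We then consider induced subgraphs of $G^\ast$ with $|V(F_k)|$ vertices that satisfy $\Psi_{\mathcal Q}$, and classify them. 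We want those that use the full remainder and exactly one vertex per color class to correspond bijectively to colorful $k$-cliques of $G$. The remaining ones should be either counted away or excluded.

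The main obstacle is this classification: unwanted $k'$-subsets may happen to induce graphs whose automorphism group lies in $\mathcal Q$. The first tool to try is inclusion--exclusion over which colour classes and remainder vertices are used. Using a color-preserving refinement, we can query the oracle on the subgraphs of $G^\ast$ obtained by deleting classes or gadget parts, and recover by Möbius inversion the count of subsets that are ``colorful'', that is, that use each part exactly once. This step has to be realized with uncolored queries only. We would handle it by choosing distinct gadget sizes for the parts so that a colorful pattern can be recognized from its isomorphism type. Alternatively, we would rely on the standard fact that counting induced subgraphs of a prescribed colored type reduces to the uncolored problem up to $f(k)$ overhead. Among colorful subsets, $G^\ast[S]$ is $F_k$ minus the slot edges missing in $G$. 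By the scaffold property, such a subgraph lies in $\Psi_{\mathcal Q}$ exactly when no slot edge is missing, that is, when the chosen vertices form a clique. Verifying the scaffold property is where the real work lies: we need a careful automorphism analysis of $F_k$ and of every subgraph $F'$ obtained by edge deletion. It is cleanest to make $|\Aut(F')|$ strictly increase with every deletion, which in turn requires the Frucht gadget to remain rigid under all these modifications.
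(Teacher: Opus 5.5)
Your reduction framework is fine and matches the paper's Lemmas~\ref{lem:incl-excl} and~\ref{lem:OR-reduction}: you blow up each color class into copies of a slot vertex, keep one shared copy of the rest of the scaffold, and pass from colorful to uncolored counts by inclusion--exclusion over deleted color classes. Your ``all slot edges present'' formulation is also fine, since it is the complement of the paper's OR form and $\Aut(H)=\Aut(\overline H)$.

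The gap is the scaffold itself. You correctly identify it as the real work, but the mechanism you propose fails, and it points in the wrong direction. You require that deleting \emph{any} nonempty set $T$ of slot edges \emph{enlarges} the automorphism group. This clashes with your own design: if the tails at $s_i,s_j$ are pairwise distinguishable, no automorphism of any $F'$ can swap $s_i$ and $s_j$, so deleting $s_is_j$ creates no involution. Even granting that a single missing edge $s_1s_2$ yields the swap $(s_1\,s_2)$, the idea does not survive several deletions. For $T=\{s_1s_2,s_1s_3\}$, neither $(s_1\,s_2)$ nor $(s_1\,s_3)$ is an automorphism, because each maps the present edge (or edge gadget) $s_2s_3$ to an absent one. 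Edge modifications generically destroy symmetry rather than create it. So there is no reason to expect all $2^{\binom{k}{2}}-1$ nonempty modifications to push the group above $\max_{Q'\in\mathcal Q}|Q'|$. Your choice of $Q$ of largest order is tied to this unworkable direction.

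The missing idea is to let the modification \emph{break} symmetry uniformly, independently of which edges are modified.
\begin{itemize}
\item The paper glues a rigid key gadget of size $O(\ell)$ onto every vertex of a Frucht graph $F_Q$; this preserves $\Aut\cong Q$ (Lemma~\ref{lem:generalempty}).
\item It puts all free edges inside the key of a single vertex $u_0$ whose orbit is nontrivial.
\item Any nonempty set of free edges makes that key distinguishable from the others, while rigidity of the keys prevents new automorphisms. So for every nonempty $X$ the group drops to the stabilizer of $u_0$, a proper subgroup of $Q$.
\item When $S_1\notin\mathcal Q$, taking $Q$ minimal under the subgroup relation within $\mathcal Q$ gives an $\ell$-clique OR in $\neg\Psi_{\mathcal Q}$. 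The case $S_1\in\mathcal Q$ is handled analogously with the roles of $\Psi_{\mathcal Q}$ and $\neg\Psi_{\mathcal Q}$ exchanged, as with the ladder for $\asymm$.
\end{itemize}

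There are also two secondary problems. First, tails given by paths of distinct lengths (or trees of distinct sizes) have total size superlinear in $k$. The parameter blowup is then not linear, and you would not get the $n^{\alpha ck}$ bound. Positions along a single rigid path, as in the ladder and key gadgets, distinguish the slots with $O(1)$ overhead each. Second, an edge gadget with extra vertices would break the exact correspondence between colorful sets and edge deletions in the host graph; plain edges suffice.
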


\subsection{Proof Techniques}

Recent approaches to the complexity of $\pindsub{\Phi}$ were based on sophisticated ways to understand Möbius transforms on the subset lattice. In this paper, we instead work with simple ad-hoc restrictions of the graph property $\asymm$ and obtain a fairly direct ``local'' reduction from $\pclique$ to $\pindsub{\asymm}$.

\paragraph*{Simulating a Boolean OR on the Edges of a Clique}
We give a simple parameterized reduction from counting $\ell$-cliques to counting induced $k$-vertex graphs satisfying $\Phi$, provided there is a restriction of $\Phi$ that amounts to a Boolean OR on the edges of an $\ell$-clique.

Our notion of restriction is formalized in the following way: For $\ell,k \in \NN$, we define an $(\ell,k)$-\emph{clique scaffold} to be a labeled graph $R$ on vertex set $[k]$ such that $[\ell]$ is an independent set in $R$.
Intuitively speaking, all edges $uv\in R$ are \emph{required}, all edges in $[\ell] \choose 2$ are \emph{free}, and all other edges are \emph{forbidden}. See \Cref{fig:grid_clique} for an example.
This interpretation gives rise to a new property $\Phi|_R$ on labeled graphs with vertex set $[\ell]$, by setting $\Phi|_R(S)=\Phi(R\cup S)$ for all graphs $S$ with $V(S)=[\ell]$.
\footnote{In terms of Boolean functions, this amounts to a restriction of $\Phi$ to a \emph{subcube} $\Phi|_R : \{0,1\}^{E(K_\ell)}\to \{0,1\}$ specified by the offset $R$, see~\cite{ODonnell14,CurticapeanDN25}.}

If we can find an $(\ell,k)$-clique scaffold $R$ such that $\Phi|_R$ is a Boolean OR on the edges of $K_\ell$, then hardness follows. 
In other words, we require that $\Phi(R)=0$ but $\Phi(R\cup S)=1$ for all $S$ on vertex set $[\ell]$ with $|E(S)|>0$. The reduction also applies if the logical negation $\neg \Phi$ contains such an OR; here $\neg \Phi$ is the property of graphs \emph{not} satisfying $\Phi$.
\footnote{In principle, the reduction from $\ell$-cliques works whenever $\Phi|_R$ has Fourier degree $\ell \choose 2$. However, both the construction of the clique scaffold and the reduction from counting $\ell$-cliques are particularly straightforward when $\Phi|_R$ is a Boolean OR.}

\paragraph*{Constructing Restrictions From Symmetries}

\begin{figure}[t]
    \centering
    \begin{tikzpicture}[scale=0.4, vertex style/.style={
    fill=black,
    circle,
    minimum size=1.2mm,
    inner sep=0pt,
    outer sep=1.5pt
}]

\begin{scope}
    \def\rows{1} 
    \def\cols{8}
    \def\spacing{1.5}

    \foreach \i in {0,...,\rows} {
        \foreach \j in {0,...,\cols} {
        \pgfmathtruncatemacro{\isclique}{ (\i == 0 && (\j == 0 || \j == 2 || \j == 4 || \j == 6 || \j == 8)) ? 1 : 0 }
            
            \ifnum\isclique=1
                \path (\j*\spacing, -\i*\spacing) coordinate[vertex style, fill=cyan, thick] (v\i\j);
            \else
                \path (\j*\spacing, -\i*\spacing) coordinate[vertex style] (v\i\j);
            \fi
        }
    }

    \path (-\spacing, 0) coordinate[vertex style] (vstart);
    \path (-\spacing, -\rows*\spacing) coordinate[vertex style] (vend);

    \foreach \i in {0,...,\rows} {
        \foreach \j [evaluate=\j as \nextj using int(\j+1)] in {0,...,7} {
            \draw[thick] (v\i\j) -- (v\i\nextj);
        }
    }

    \foreach \j in {0,...,\cols} {
        \draw[thick] (v0\j) -- (v1\j);
    }

    \draw[thick] (vstart) -- (v00);
    \draw[thick] (vend) -- (v10);

    \foreach \u/\v in {0/2, 0/4, 0/6, 0/8, 2/4, 2/6, 2/8, 4/6, 4/8, 6/8} {
        \pgfmathsetmacro{\bendval}{30 + (\v-\u)*5}
        \draw[thick,dashed,color=red] (v0\u) to [bend left=\bendval] (v0\v);
    }

\end{scope}
\end{tikzpicture}

    \caption{Our restriction of $\asymm$: The black edges are required edges in $R$, the red dashed edges are free edges, all other edges are forbidden. On black edges alone, the graph admits two automorphisms, but adding any non-empty set of red edges makes it asymmetric.}
    \label{fig:grid_clique}
\end{figure}

Our restrictions for $\asymm$ are derived from ladder graphs; see \Cref{fig:grid_clique} for an example.
The ladder with two pendant vertices has exactly two automorphisms: the identity and vertical reflection.
A careful choice of free edges ensures that \emph{any} non-empty set of free edges breaks the symmetry. In particular, the free edges can be chosen as the edges of an $\ell$-clique for $\ell = \Omega(k)$, which enables a simple reduction from the $\ell$-clique problem.

Regarding our \Cref{thm:main-groupcollection} for more general properties $\Psi_{\mathcal Q}$, note that $\mathcal Q=\{S_1\}$ recovers the case of $\asymm$. 
In all other cases, we use Frucht's theorem, which asserts that every finite group $Q$ is realized by some graph.
Our restriction relies on gluing ``key gadgets'' onto this graph that (i) preserve the automorphism group, and (ii) support a large clique on free edges such that adding \emph{any} non-empty subset of free edges gives a graph whose automorphisms form a proper subgroup of $Q$.

\section{Preliminaries}

For integers $n\geq 1$, let $[n]:=\{1,2,\dots,n\}$.

\paragraph*{Graphs.}
For a graph $G=(V,E)$, let $N(v)$ be the neighborhood of $v\in V$. 
A clique in $G$ is a set $S\subseteq V$ such that $u,v\in S$ implies $uv\in E$. An independent set (also known as a stable set) requires $uv \notin E$ instead.

We also consider (not necessarily properly) \emph{$k$-colored graphs}, for $k \in \NN$: Such graphs $G=(V,E)$ are given together with a partition $V=V_1 \cup \ldots \cup V_k$. A vertex-set $S\subseteq V$ is \emph{colorful} if $S=\{v_1,\ldots,v_k\}$ with $v_i \in V_i$ for all $i \in [k]$.

For a graph $G$ and a set of edges $X$, we denote by $G \cup X$ the graph $(V(G), E(G) \cup X)$.

In this paper, a \emph{(labeled graph) property} is a function from labeled graphs to $\{0,1\}$. 
It is a \emph{graph property} if all isomorphic graphs $H,H'$ satisfy $\Phi(H)=\Phi(H')$. 
For $k\in\NN$, a $k$-vertex graph property has its domain restricted to the $k$-vertex graphs.

\paragraph*{Groups.}

An \emph{automorphism} of $G$ is a bijection $f:V\to V$ preserving (non)-adjacency; the set of all such mappings forms the group $\textup{Aut}(G)$. We say $G$ is \emph{asymmetric} if $\textup{Aut}(G) = \{\textup{id}\}$.

For each integer $k$, denote the symmetric group of degree $k$ by $S_k$. Denote the trivial group by $S_1=\{\textup{id}\}$. For groups $Q_1$ and $Q_2$, we write $Q_1\le Q_2$ to denote that $Q_1$ is a subgroup of $Q_2$.

\paragraph*{Parameterized Complexity.}

The problem $\pclique$ is the canonical $\sharpwone$-hard parameterized counting problem: On input a graph $G$ and $k\in \NN$, compute the number of $k$-cliques in $G$. A parameterized problem with inputs $(x,k')$ is $\sharpwone$-hard if there is a parameterized reduction from $\pclique$ to the target problem, i.e., an algorithm solving $\pclique$ in time $f(k)n^{O(1)}$ with oracle access to the target problem, such that all oracle queries satisfy $k'\leq g(k)$. Here, $f$ and $g$ are computable functions. 

We use a multicolored version $\pmclique$, where $G$ is $k$-colored and we count colorful cliques, as defined above. To reduce $\pclique$ to $\pmclique$, replace each vertex $v\in V(G)$ by a $k$-independent set $C_v$, replace edges $uv$ by complete bipartite graphs between $C_u$ and $C_v$, and for each $i\in [k]$, define the color class $V_i$ to consist of the $i$-th vertex from all independent sets $C_v$ for $v\in V(G)$. Each $k$-clique of $G$ corresponds to exactly $k!$ colorful $k$-cliques in the constructed graph, so we divide the resulting count by $k!$.

\section{Main Reduction}\label{sec:main-reduction}

For reference, we repeat the definitions from the introduction.
\begin{defn}
For $\ell, k\in \NN$ with $\ell \leq k$, an $(\ell,k)$-clique scaffold is a labeled graph $R$ on vertex set $[k]$ such that $[\ell]$ is an independent set in $R$. Let $\Phi$ be a $k$-vertex property of graphs with vertex set $[k]$ and write $K= \binom{[\ell]}{2}$.
We say that $\Phi$ \emph{contains an $\ell$-clique OR} if there is a clique scaffold $R$ such that $\Phi(R)=0$ and $\Phi(R\cup S)=1$ for all $\emptyset \subsetneq S\subseteq K$.

\end{defn}

We show that large $\ell$-clique ORs in $\Phi$ or in $\neg \Phi$ render $\Phi$ hard.
Towards this, we consider the problem $\pcolindsub{\Phi}$, also studied in other works~\cite{JerrumM15,JerrumM15b,JerrumM17,DoringMW24,DoringMW25}: On input $k \in \mathbb N$ and a $k$-colored graph $G=(V,E)$, count the colorful sets $S \subseteq V$ such that $\Phi$ is satisfied by the uncolored graph underlying $G[S]$.
This problem can be reduced to the uncolored version by a standard application of the inclusion-exclusion principle; see also~\cite{CurticapeanM14,Curticapean15,CN} for related examples and below for a self-contained proof.
\begin{lem}
\label{lem:incl-excl}
    For every property $\Phi$, there is a parameterized reduction from the problem $\pcolindsub{\Phi}$ to $\pindsub{\Phi}$ that does not increase the parameter value. 
\end{lem}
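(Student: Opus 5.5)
We follow the standard inclusion–exclusion argument over color classes. Let $(G,k)$ be an instance of $\pcolindsub{\Phi}$ with $G$ $k$-colored by $V=V_1\cup\dots\cup V_k$. We may assume the $V_i$ are disjoint, since this is part of the definition of a partition. For a subset $J\subseteq[k]$ of colors, let $G_J$ be the uncolored graph $G[\bigcup_{i\in J}V_i]$. Let $N_J$ be the number of $k$-vertex sets $S\subseteq V(G_J)$ with $\Phi(G[S])=1$. Each $N_J$ is obtained with one oracle call to $\pindsub{\Phi}$ on $(G_J,k)$. The parameter stays $k$ and $|V(G_J)|\le |V(G)|$.

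The key step is to relate these counts to colorful sets. For every $k$-vertex set $S\subseteq V$, let $c(S)=\{i\in[k] : S\cap V_i\neq\emptyset\}$ be the set of colors it hits. Then $S$ is counted in $N_J$ if and only if $c(S)\subseteq J$ and $\Phi(G[S])=1$. Moreover, $S$ is colorful exactly when $c(S)=[k]$: if $S$ has $k$ vertices and hits all $k$ classes, each class contains exactly one of its vertices. Then
\[
\sum_{J\subseteq[k]}(-1)^{k-|J|}N_J=\sum_{S:\,\Phi(G[S])=1}\ \sum_{c(S)\subseteq J\subseteq[k]}(-1)^{k-|J|}.
\]
The inner sum equals $(1-1)^{k-|c(S)|}$, which is $1$ if $c(S)=[k]$ and $0$ otherwise. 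Hence the alternating sum is exactly the number of colorful sets $S$ with $\Phi(G[S])=1$. Here $\Phi$ is applied to the uncolored graph, as in the definition of $\pcolindsub{\Phi}$.

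The reduction therefore makes $2^k$ oracle queries, each with parameter $k$, and performs $f(k)\,n^{O(1)}$ additional work. It is a parameterized Turing reduction that does not increase the parameter value. The argument uses nothing about $\Phi$ beyond evaluating it on induced subgraphs, so it holds for every property. The only points that need care are checking that $c(S)=[k]$ characterizes colorful sets of size exactly $k$, and that $N_J$ counts only vertex subsets of $G_J$. Both follow directly from the disjointness of the color classes.
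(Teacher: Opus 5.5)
Your proposal is correct and is essentially the paper's proof: the paper deletes the color set $A$ and sums $(-1)^{|A|}\#X_A$, which is your sum over kept colors $J=[k]\setminus A$ with sign $(-1)^{k-|J|}$. You also spell out the $(1-1)^{k-|c(S)|}$ cancellation and the fact that $c(S)=[k]$ characterizes colorful $k$-sets, which the paper leaves implicit.
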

\begin{proof}
Let $(G,k)$ be an instance for $\pcolindsub{\Phi}$ with $V=V_1 \cup \ldots \cup V_k$.
For $A\subseteq [k]$, let $X_A$ be the set of $k$-vertex subsets satisfying $\Phi$ that avoid all colors from $A$ but otherwise ignore the colors. Each value $\#X_A$ for $A\subseteq [k]$ can be determined with an oracle call $(G_A,k)$ for $\pindsub{\Phi}$, where $G_A$ is obtained from $G$ by deleting all vertices with colors from $A$. The parameter $k$ does not increase.
The number of colorful sets in $G$ satisfying $\Phi$ can be computed with $2^k$ oracle calls and $2^k n^2$ additional time via $\sum_{A\subseteq [k]}(-1)^{|A|} \#X_A$.
\end{proof}

\begin{lem}
\label{lem:OR-reduction}
    Let $\Phi$ be a $k$-vertex property that contains an $\ell$-clique OR.
    Given an $\ell$-colored graph $G$ on $n$ vertices, we can compute in $f(k)n^2$ time a $k$-colored graph $G'$ on $n+k-\ell$ vertices such that the number of colorful $\ell$-cliques in $G$ equals the number of colorful $k$-vertex sets in $G'$ that do \emph{not} satisfy $\Phi$.
\end{lem}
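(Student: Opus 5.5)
Let $R$ be the $(\ell,k)$-clique scaffold witnessing the $\ell$-clique OR, so $\Phi(R)=0$ and $\Phi(R\cup S)=1$ for every nonempty $S\subseteq K=\binom{[\ell]}{2}$. The idea is to glue the fixed vertices $\ell+1,\dots,k$ of $R$ onto $G$ as singleton color classes. We then flip the edges of $G$ so that a colorful $\ell$-set induces the empty graph on $[\ell]$ exactly when it is a clique in $G$.

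\textbf{Construction.} Let $G$ have color classes $V_1,\dots,V_\ell$. Build $G'$ on $V(G)\cup\{w_{\ell+1},\dots,w_k\}$ with color classes $V'_i=V_i$ for $i\le\ell$ and $V'_j=\{w_j\}$ for $j>\ell$. Add the following edges:
\begin{itemize}
\item $w_iw_j$ for $i,j>\ell$ whenever $ij\in E(R)$;
\item $vw_j$ for $v\in V_i$, $i\le \ell$, $j>\ell$, whenever $ij\in E(R)$;
\item $uv$ for $u\in V_i$, $v\in V_j$ with $i\ne j\le \ell$, whenever $uv\notin E(G)$.
\end{itemize}
Edges inside a color class $V_i$ are irrelevant, since colorful sets pick one vertex per class, so we may omit them. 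The vertex count is $n+k-\ell$. The construction takes $O(k^2 + kn + n^2)$ time, which is within $f(k)n^2$.

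\textbf{Correctness.} A colorful set in $G'$ has the form $T=\{v_1,\dots,v_\ell,w_{\ell+1},\dots,w_k\}$ with $v_i\in V_i$, so these sets correspond bijectively to colorful $\ell$-sets of $G$. Identify $v_i$ with $i$ and $w_j$ with $j$. Under this labeling $G'[T]$ becomes the graph $R\cup S$ on $[k]$, where
\[
S=\{ij\in K : v_iv_j\notin E(G)\}.
\]
For this we use that $[\ell]$ is independent in $R$, so every edge between indices in $[\ell]$ comes from $S$, and every edge involving an index above $\ell$ agrees with $R$.

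$\Phi$ is a graph property, so we must be careful that $\Phi$ is evaluated on the uncolored graph $G'[T]$. That graph is isomorphic, via the labeling above, to the labeled graph $R\cup S$, hence $\Phi(G'[T])=\Phi(R\cup S)$. By the OR assumption, $\Phi(R\cup S)=0$ if and only if $S=\emptyset$, that is, if and only if $\{v_1,\dots,v_\ell\}$ is a colorful clique in $G$. Summing over $T$ gives the claimed equality of counts.

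\textbf{Main obstacle.} The only delicate point is applying the scaffold's labeled conditions to unlabeled induced subgraphs. This is handled by the isomorphism invariance noted above, together with checking that the color structure forces the labeling. The remaining steps are routine.
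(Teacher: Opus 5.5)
Your proof is correct and follows the paper's construction essentially verbatim: complement edges of $G$ among the first $\ell$ color classes, singleton classes $\{w_j\}$ for $j>\ell$ wired according to $R$, and the observation that each colorful set $T$ satisfies $G'[T]\cong R\cup S$, where $S=\emptyset$ exactly when $\{v_1,\dots,v_\ell\}$ is a clique in $G$. The one detail the paper states that you leave implicit is that the witness scaffold $R$ is found by brute force in $f(k)$ time (or can be hardwired, since $\Phi$ is fixed).
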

\begin{proof}
    Consider an instance for $\pmclique$ with graph $G=(V, E)$, parameter value $\ell$, and partition $V=V_1 \cup\ldots\cup V_\ell$.
    Let $R$ be a $(\ell,k)$-clique scaffold witnessing that $\Phi$ contains an $\ell$-clique OR; we find it using brute-force in $f(k)$ time.

    We define an instance for $\pcolindsub{\Phi}$ with graph 
    $G'=(V',E')$, parameter value $k$, and partition $V'=V'_1 \cup\ldots\cup V'_k$ as follows: 
    \begin{enumerate}
        \item For $i\in [\ell]$, let $V'_i$ be a copy of $V_i$ and let the edges be those of the complement graph $\overline G$, i.e., for each $uv\not\in E(G)$, add the corresponding edge between the copies of $u$ and $v$ in $G'$.
        \item For $i\in [k]\setminus [\ell]$, let $V'_i=\{w_i\}$ for a fresh vertex $w_i$.
        \item For $i,j\in [k]\setminus [\ell]$, add the edge $w_iw_j$ iff $ij\in E(R)$. For $i \in [\ell]$ and $j \in [k]\setminus [\ell]$, add all edges between $V'_i$ and $w_j$ iff $ij\in E(R)$.
        No edges are added for $ij$ with $ij\notin E(R)$.
    \end{enumerate}

    Every colorful set $S \subseteq V(G')$ picks all $w_i$ for $i\in [k]\setminus [\ell]$ and some $v_i \in V'_i$ for each $i\in [\ell]$.
    Let $T$ be the set of edges induced by $v_1,\dots,v_\ell$ in $G'$.
    Then $G'[S]\cong R\cup T$, and $T=\emptyset$ if and only if $\{v_1,\dots,v_\ell\}$ is a clique in $G$.
    If $T=\emptyset$, then $\Phi(G'[S])=0$; if $T\neq\emptyset$, then $\Phi(G'[S])=1$ by the definition of an $\ell$-clique OR.
    Thus colorful $\ell$-cliques in $G$ are exactly the colorful $k$-vertex sets in $G'$ that do \emph{not} satisfy $\Phi$.
\end{proof}

In the following, write $\slice{\Phi}{k}$ for the property defined on $k$-vertex graphs that agrees with $\Phi$ on its domain, and write $\neg \slice{\Phi}{k}$ for its pointwise negation.

\begin{theo}\label{thm:mainreduction-w}
    Let $\Phi$ be a computable graph property and assume that every $\ell \in \NN$ admits some $k=k(\ell)$ such that either $\slice{\Phi}{k}$ or $\neg \slice{\Phi}{k}$ contains an $\ell$-clique OR.
    \begin{itemize}
    \item Then $\pindsub{\Phi}$ is $\sharpwone$-hard.
    \item 
    If additionally $k(\ell) \leq d\ell$ for a constant $d\geq1$ and all $\ell$, then the following holds: If $\pindsub{\Phi}$ on parameter $k$ admits an $f(k)n^{ck}$ time algorithm for $0<c\leq 1$, then $\pmclique$ with parameter value $\ell$ can be solved in time $f'(\ell)n^{cd\ell}$
        for some computable function $f'$.
    \end{itemize}
\end{theo}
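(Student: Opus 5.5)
The plan is to chain the reductions already established. We start from $\pmclique$ with an $\ell$-colored graph $G$ on $n$ vertices. We then compute $k=k(\ell)$ and decide which of $\slice{\Phi}{k}$ or $\neg\slice{\Phi}{k}$ contains an $\ell$-clique OR. Since $\Phi$ is computable, we can do this by brute force: we enumerate all $k$ such that a witness exists, which is guaranteed to terminate. We also enumerate all $(\ell,k)$-clique scaffolds $R$ and test both conditions, which takes $f_0(\ell)$ time for a computable $f_0$.

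The first case is that $\slice{\Phi}{k}$ contains the OR. Applying Lemma~\ref{lem:OR-reduction} yields a $k$-colored graph $G'$ on $n+k-\ell$ vertices. The number of colorful $\ell$-cliques in $G$ equals the total number of colorful $k$-sets of $G'$ minus the number of colorful $k$-sets satisfying $\Phi$. The total is just the product $\prod_i |V'_i|$, which is trivial to compute. The number satisfying $\Phi$ is one instance of $\pcolindsub{\Phi}$ with parameter $k$, since the instance only involves $k$-vertex graphs, where $\Phi$ and $\slice{\Phi}{k}$ agree.

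The second case is that $\neg\slice{\Phi}{k}$ contains the OR. Here we apply Lemma~\ref{lem:OR-reduction} to the $k$-vertex property $\neg\slice{\Phi}{k}$. The colorful cliques then equal the colorful $k$-sets satisfying $\Phi$ directly, again one instance of $\pcolindsub{\Phi}$. In both cases, Lemma~\ref{lem:incl-excl} turns this into $2^k$ oracle calls to $\pindsub{\Phi}$ with parameter $k$ on graphs with at most $n+k$ vertices. Combined with the reduction from $\pclique$ to $\pmclique$ given in the preliminaries, this is a parameterized reduction with $g(\ell)=k(\ell)$, which is computable by the search above. This proves $\sharpwone$-hardness.

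For the quantitative part, we use $k\le d\ell$. Each oracle call runs in time $f(k)(n+k)^{ck}$. Since $c\le1$ and $k\le d\ell$, this is at most $f(k)(n+k)^{cd\ell}$. Bounding $(n+k)^{cd\ell}\le 2^{cd\ell}\,k^{cd\ell}\,n^{cd\ell}$ (using $n+k\le 2\max(n,k)$) absorbs the $k$-dependence into a function of $\ell$. The overhead consists of $f_0(\ell)$ for finding $R$, $f(k)n^2$ for building $G'$, and $2^k n^2$ for inclusion–exclusion.

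The step I expect to need care is the additive $n^2$ term, since $cd\ell$ could be smaller than $2$. I would handle this by noting that for $cd\ell<2$ only finitely many $\ell$ are affected, or more simply by stating the bound as $f'(\ell)n^{\max(cd\ell,2)}$ and remarking that this is the intended bound for $\ell$ large enough. Another option is to argue that any nontrivial algorithm for $\pindsub{\Phi}$ must read the input anyway, so the bound is only meaningful when the exponent is at least $2$. Finally, the sizes satisfy $f(k)\le \max_{k'\le d\ell} f(k')$, which is computable, so we can take $f'(\ell)=2^{d\ell}\,(2d\ell)^{d\ell}\max_{k'\le d\ell}f(k') + f_0(\ell)+\ldots$.
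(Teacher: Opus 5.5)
Your proposal is correct and follows the paper's proof essentially step for step. It uses the same case split on whether $\slice{\Phi}{k}$ or $\neg\slice{\Phi}{k}$ contains the OR, the same count $N=\prod_i|V_i|$ minus the $\pcolindsub{\Phi}$ count in the first case, \Cref{lem:incl-excl} to reach $\pindsub{\Phi}$, and a brute-force search for $k(\ell)$ based on the computability of $\Phi$. Your extra bookkeeping covers details the paper silently glosses over: the $(n+k)^{cd\ell}$ bound, using $\max_{k'\le d\ell}f(k')$ instead of tacitly assuming $f$ is monotone, and the additive $n^2$ overhead when $cd\ell<2$. Of your three fixes for that overhead, the $f'(\ell)n^{\max(cd\ell,2)}$ formulation is the cleanest, since the ``finitely many $\ell$'' remark alone does not give the stated bound for those $\ell$.
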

\begin{proof}
    We prove the second part first. Choose $k \leq d\ell$. Let $G$ be an $\ell$-colored instance of $\pmclique$ with color classes $V_1,\dots,V_\ell$.
    Let $N:=\prod_{i=1}^{\ell}|V_i|$ be the total number of colorful choices.
    Suppose first that $\slice{\Phi}{k}$ contains an $\ell$-clique OR.
    By \Cref{lem:OR-reduction}, we obtain a $k$-colored graph $G'$ on $n+k-\ell$ vertices such that the number of colorful $\ell$-cliques in $G$ is $N$ minus the number of colorful $k$-vertex sets in $G'$ satisfying $\slice{\Phi}{k}$.
    This latter quantity can be computed in time $f(k)n^{ck}$ using \Cref{lem:incl-excl} and the assumed algorithm for $\pindsub{\slice{\Phi}{k}}$.

    If instead $\neg \slice{\Phi}{k}$ contains an $\ell$-clique OR, the same construction applied to $\neg \slice{\Phi}{k}$ gives that the colorful $\ell$-cliques in $G$ are counted by the colorful $k$-vertex sets in $G'$ satisfying $\slice{\Phi}{k}$.
    Again this value is computed in time $f(k)n^{ck}$ using \Cref{lem:incl-excl} and the assumed algorithm.
    Thus $\pmclique$ with parameter $\ell$ can also be solved in time $f(k)n^{ck} \leq f(d\ell)n^{cd\ell}$.

    For the $\sharpwone$-hardness, recall that $\pmclique$ is $\sharpwone$-hard. Since $\Phi$ is computable, we can use brute-force to find, for every $\ell$, some computable $k(\ell)$ for which either $\slice{\Phi}{k}$ or $\neg \slice{\Phi}{k}$ contains an $\ell$-clique OR.
    The argument above therefore gives a parameterized reduction to $\pindsub{\Phi}$.
\end{proof}

\section{Asymmetry has Clique-ORs}

We now prove \Cref{thm:main-asym} by showing that every $\ell \in \NN$ admits some $k \in \NN$ such that $\asymm$ contains an $\ell$-clique OR. To this end, we construct a clique scaffold $R$ based on a modified $2\times 2\ell$ grid, a structure whose automorphisms are easy to characterize and can be made asymmetric by planting certain edges in the first row.

\begin{defn}
    For any $\ell \in \NN$, let $F_\ell$ be a $2\times 2\ell$ grid with the edge $\{(1,1),(2,1)\}$ removed. Let $K= \binom {C}{2}$ where $C=\{(1,2i) \mid i\in[\ell]\}$.
    For $k:=4\ell$, the graph $F_{\ell}$ is an $(\ell,k)$-clique scaffold after relabeling vertices in $C$ with $1,\dots,\ell$.
\end{defn}

See also Figure~\ref{fig:grid_clique} for $F_{5}$. We show $F_{\ell}$ is a witness that $\asymm$ contains an $\ell$-clique OR. Clearly $F_{\ell}$ is symmetric and $|F_{\ell}| = 4\ell$. We show that adding any non-empty subset of $K$ to $F_{\ell}$ results in an asymmetric graph. 

\begin{lem}\label{lem:asymmetry}
    Let $F_\ell$ and $K$ be defined as above. For every $\ell \in \NN$, there is an $(\ell,k)$-clique scaffold $R$ where $k:=|V(F_\ell)|=4\ell$ such that for any $X \subseteq K$, the automorphism group of $F':=F_{\ell}\cup X$ is non-trivial if and only if $X= \emptyset$, that is, $\slice{\asymm}{k}$ contains an $\ell$-clique OR witnessed by $F_{\ell}$.
\end{lem}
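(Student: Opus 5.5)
The approach is to analyse an arbitrary automorphism $\sigma$ of $F' = F_\ell \cup X$, using two facts: every vertex of the bottom row $\{(2,j)\}$ has the same neighbourhood in $F'$ as in $F_\ell$, and only vertices of $C$ gain degree. Write $\rho$ for the row-swapping map $(r,j)\mapsto(3-r,j)$.

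For $X=\emptyset$, the map $\rho$ is a nontrivial automorphism of $F_\ell$. This holds because the ladder on columns $2,\dots,2\ell$ together with the two pendant vertices $(1,1),(2,1)$ is symmetric under swapping the rows. This gives $\asymm(F_\ell)=0$. It remains to show that $\operatorname{Aut}(F')$ is trivial whenever $X\neq\emptyset$.

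\textbf{Step 1 (pendants).} In $F_\ell$ the only degree-$1$ vertices are $(1,1)$ and $(2,1)$; all other vertices have degree $2$ or $3$. Adding edges of $X$ only raises degrees, and $(1,1)\notin C$. Hence in $F'$ the degree-$1$ vertices are still exactly $(1,1)$ and $(2,1)$. So $\sigma$ permutes $\{(1,1),(2,1)\}$ and therefore also their unique neighbours $\{(1,2),(2,2)\}$.

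\textbf{Step 2 ($\sigma$ fixes $(2,1)$).} Then $\sigma$ fixes columns $1$ and $2$. We show by induction on $j$ that $\sigma$ fixes column $j+1$ once it fixes columns $1,\dots,j$.
\begin{itemize}
\item The bottom vertex $(2,j)$ gains no edges from $X$. Its only neighbour outside columns $1,\dots,j$ is $(2,j+1)$, so $\sigma$ fixes $(2,j+1)$.
\item $(1,j+1)$ is a common neighbour of $(1,j)$ and $(2,j+1)$. The other neighbours of $(2,j+1)$ are $(2,j)$, which is already fixed, and $(2,j+2)$, which is not adjacent to $(1,j)$ even with free edges, since free edges only join top-row vertices. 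So $(1,j+1)$ is the unique unfixed common neighbour, and $\sigma$ fixes it too.
\end{itemize}
Hence $\sigma=\mathrm{id}$.

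\textbf{Step 3 ($\sigma$ swaps $(1,1),(2,1)$).} This is the case where $X\neq\emptyset$ must actually be used. Let $m=2i$ be the smallest column such that $(1,m)$ is incident to an edge of $X$; then $m\ge 2$. We show by induction that $\sigma$ agrees with $\rho$ on columns $1,\dots,m$.
\begin{itemize}
\item The base case, columns $1$ and $2$, comes from Step 1.
\item For the inductive step with $j<m$, the vertex $\sigma(2,j+1)$ must be a neighbour of $\sigma(2,j)=(1,j)$ that lies outside columns $1,\dots,j$. By minimality of $m$, the vertex $(1,j)$ carries no $X$-edges, so this forces $\sigma(2,j+1)=(1,j+1)$. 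Symmetrically, $\sigma(1,j+1)$ is a neighbour of $(2,j)$ outside columns $1,\dots,j$, so $\sigma(1,j+1)=(2,j+1)$.
\end{itemize}
At column $m$ this gives $\sigma(2,m)=(1,m)$. However,
\[
\deg_{F'}(2,m)=\deg_{F_\ell}(2,m)=\deg_{F_\ell}(1,m)<\deg_{F'}(1,m),
\]
which is a contradiction. So this case cannot occur.

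The main obstacle is Step 3. In this case $\rho$ is no longer an automorphism of $F'$, so we cannot reduce to Step 2 by composing with it. Free edges could also spoil neighbourhood-uniqueness arguments, which is why the induction is stopped at the first column carrying $X$-edges, where the degree mismatch gives the contradiction.

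Combining the three steps, $\asymm(F_\ell\cup X)=1$ for every $\emptyset\neq X\subseteq K$ and $\asymm(F_\ell)=0$. Thus $F_\ell$, relabelled as an $(\ell,4\ell)$-clique scaffold, witnesses an $\ell$-clique OR in $\slice{\asymm}{4\ell}$.
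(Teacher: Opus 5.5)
Your proof is correct and follows essentially the same route as the paper. Both arguments split on whether the automorphism fixes or swaps the two pendant vertices, propagate column by column via the neighbourhood and common-neighbour arguments in the fixed case, and reach a degree contradiction at a top-row vertex carrying a free edge in the swapped case. Stopping the induction at the first column $m$ with an $X$-edge makes explicit what the paper's ``by a similar argument'' leaves implicit, namely that free edges cannot disturb the column-flipping before that column.
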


\begin{proof}
    By definition, $F_{\ell}$ forms an $(\ell,k)$-clique scaffold. Given $|\Aut(F')|>1$ for $X=\emptyset$, it suffices to show $\Aut(F')$ is trivial under the assumption $X \neq \emptyset$. Suppose $f \in \Aut(F')$; we show $f$ must be the identity map. Since only the pendant vertices $(1,1)$ and $(2,1)$ have degree $1$, the automorphism $f$ either fixes or swaps them.

    \vspace{0.2cm}
    In \textbf{Case 1}, the pendant vertices are fixed by $f$. 
    We proceed by induction on the columns to show $f$ is the identity map. For the base case of the first column, as $(1,2)$ is the unique neighbor of $(1,1)$ and $f$ is an automorphism, $f$ maps $(1,2)$ to $(1,2)$. Similarly, it maps $(2,2)$ to $(2,2)$.

    Suppose $f$ is identical in the first $j-1$ columns for $3\le j\le 2\ell$. For the $j$-th column, we first consider $(2,j)$. Since $f$ fixes $(2,j-1)$ and all its neighbors except for $(2,j)$, to preserve the neighborhood of $(2,j-1)$, $f$ must map $(2,j)$ to $(2,j)$.

    For $(1,j)$, we observe that 
    $$
    \{ (1,j),(2,j-1) \} = N((1,j-1)) \cap N((2,j)),
    $$
    regardless of the choices of $X$. As $f$ fixes $(1,j-1),(2,j),(2,j-1)$, their remaining common neighbor is also fixed. Hence, the $j$-th column is fixed by $f$, and thus $f$ is the trivial automorphism.
    
    \vspace{0.2cm}
    In \textbf{Case 2}, the pendant vertices are swapped by $f$.
    By a similar argument, we show that every column is ``flipped'' by $f$, that is, $f$ maps $(x,y)$ to $(3-x,y)$ for all $x\in[2],y\in [2\ell]$. However, since $f$ preserves the degree and maps $(1,2j)$ to $(2,2j)$ for all $j\in[\ell]$, the degree of $(1,2j)$ is $3$ when $j\neq \ell$ and $2$ when $j=\ell$, which means no edge in $X$ is incident to $(1,2j)$. This implies $X$ is the empty set, contradicting the assumption that $X\neq\emptyset$.
\end{proof}

We are ready to apply the reduction from Section~\ref{sec:main-reduction} to obtain Theorem~\ref{thm:main-asym}.

\begin{proof}[Proof of Theorem~\ref{thm:main-asym}]
    By Lemma~\ref{lem:asymmetry}, for every $\ell\in \NN$, the property $\asymm$ on $4\ell$ vertices contains an $\ell$-clique OR witnessed by the $(\ell,4\ell)$-clique scaffold $F_{\ell}$. The theorem follows by applying Theorem~\ref{thm:mainreduction-w}.
\end{proof}

\section{General Automorphism Groups}

In Lemma~\ref{lem:asymmetry}, we constructed a clique scaffold graph with a particular non-trivial automorphism group $Q$ such that adding free edges breaks all symmetries. We now construct similar graphs for \emph{every} non-trivial finite group $Q$: adding any non-empty subset of free edges breaks \emph{some} symmetries of $Q$, resulting in a strictly smaller automorphism group. 

To begin with, Frucht's Theorem tells us that every finite group $Q$ admits an undirected simple graph $F_Q$ with automorphism group isomorphic to $Q$.

\begin{theo}[Frucht's Theorem \cite{Fru}, \cite{Bab} Section 4.3]\label{thm:fru}
    For any finite group $Q$, there exists a graph $F_Q$ such that $|V(F_Q)|=O(|Q|)$ and $\textup{Aut}(F_Q)\cong Q$.
\end{theo}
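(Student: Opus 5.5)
The plan is to build $F_Q$ from a colored Cayley digraph of $Q$. The point to exploit is that the statement bounds only the number of \emph{vertices}: a graph with $O(|Q|)$ vertices may still have $\Theta(|Q|\log|Q|)$ edges. So the generator labels will be encoded by edge patterns between a constant number of copies of $Q$, and never by per-arc gadgets, which would cost $\Theta(|Q|\log|Q|)$ vertices or more. The trivial group is handled by $K_1$, and any finite list of small exceptional groups by ad hoc constant-size graphs, so assume $|Q|$ is larger than a fixed constant.

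\textbf{Step 1: colored Cayley digraph.} Fix an irredundant generating sequence $s_1,\dots,s_m$ of $Q$. Along the chain $\langle s_1\rangle<\langle s_1,s_2\rangle<\dots$ each subgroup has index at least $2$ in the next, so $m\le\log_2|Q|$. Let $D$ be the digraph on $Q$ with an arc $x\to xs_i$ of color $i$ for every $x\in Q$ and $i\in[m]$. Its color-preserving automorphisms are exactly the left multiplications $x\mapsto gx$. Indirectly, this is the routine argument: such a map is determined by the image of $1$ by following colored arcs, and every left multiplication preserves all arcs.

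\textbf{Step 2: uncolor with a constant number of layers.} Take vertex set $Q\times\{0,1,\dots,c\}$ for a constant $c$, so there are $(c+1)|Q|=O(|Q|)$ vertices. Encode the colored arcs by undirected edges that are left-invariant, in the sense that $(x,a)\sim(y,b)$ depends only on $x^{-1}y$ and $a,b$. Then left multiplications act as automorphisms automatically, and $Q\le\Aut$ holds.

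\begin{itemize}
\item First, make the layers recognizable, for example by giving layers pairwise distinct degrees or local cycle structures. Add the edges $(x,a)\sim(x,a+1)$ as ``columns'' so that an automorphism preserving layers induces one common permutation $\pi$ of $Q$ on every layer.
\item Second, encode the color $i$ arc $x\to xs_i$ by a binary word of length $O(\log m)$ spread over the layers. Concretely, put the edge $(x,0)\sim(xs_i,1)$ for every $i$, which records direction. In addition, for each bit $b$ of the binary expansion of $i$, put the edge $(x,b+2)\sim(xs_i,0)$ when that bit equals $1$.
\end{itemize}

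This needs $c=O(\log m)$ layers, which is not constant. So to keep $c=O(1)$ I would instead replace the generating sequence by a redundant one: insert between consecutive generators a short path structure of prefix products, such as the elements $s_1s_2\cdots s_i$. The aim is that the colors become recoverable by \emph{order}, i.e. by which arcs form a chain with which, rather than by a label. The automorphism argument then checks that $\pi$ preserves every arc set $\{(x,xs_i)\}$, so $\pi$ is an automorphism of $D$ and hence a left multiplication.

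\textbf{Main obstacle.} The crux is Step 2: distinguishing up to $\log_2|Q|$ generator classes using only $O(1)$ layers, while ruling out spurious automorphisms that permute generator classes or mix layers. For this I would first try the chain trick: order the generators so that the arc classes are identified recursively by counting common neighbors. Class $1$ should be identifiable as the unique class whose arcs lie on many short cycles of a prescribed type, then class $2$ relative to class $1$, and so on; verifying this inductively is the point I expect to require the most care. If it fails for some groups, the fallback I would examine is the refinement of the Cayley-graph argument given in \cite{Bab}, Section~4.3, from which the paper takes the $O(|Q|)$ vertex bound.
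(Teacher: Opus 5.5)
\paragraph{The gap: Step~2 is not carried out.}
The paper gives no proof of this statement. It imports it from Frucht and Babai; the linear vertex bound is due to Babai, with at most $2|Q|$ vertices for all but finitely many $Q$. The paper never really needs the bound: $Q$ ranges over a fixed finite $\mathcal Q$, so $|V(F_Q)|$ is absorbed into $\alpha_{\mathcal Q}$, and the classical Frucht construction would suffice. Your proposal sets out to prove the linear bound itself, and there it has a genuine gap, because everything rests on Step~2 and Step~2 is not carried out.
\begin{itemize}
\item The explicit encoding you give needs $c=\Theta(\log m)=\Theta(\log\log|Q|)$ layers. Even granting its unproved recognition claims, it yields only $O(|Q|\log\log|Q|)$ vertices.
\item The replacement ``chain trick'' is stated only as an aim. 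No connection sets are specified, and no argument shows that $\Theta(\log|Q|)$ arc classes can be told apart using $O(1)$ layers. Your fallback is the very reference the paper cites.
\end{itemize}
What has to be proved is the following. For some explicit left-invariant graph on $Q\times\{0,\dots,c\}$ with $c=O(1)$, every automorphism maps layer $0$ onto itself and the stabilizer of $(1,0)$ is trivial; then $\Aut(F_Q)=Q$ by orbit--stabilizer. As written, you only have $Q\le\Aut(F_Q)$.

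\paragraph{Why this is not a routine check.}
In your setting the graph is determined by connection sets $S_{a,b}\subseteq Q$, with $(x,a)\sim(y,b)$ iff $x^{-1}y\in S_{a,b}$. Every group automorphism $\varphi$ of $Q$ that fixes all $S_{a,b}$ setwise yields a graph automorphism $(x,a)\mapsto(\varphi(x),a)$ fixing $(1,0)$.
\begin{itemize}
\item For example, take $Q=\mathbb{Z}_2^m$ written additively, with $s_i=e_i$. Every permutation of the basis preserves $S_{0,1}=\{0,e_1,\dots,e_m\}$. So all $m!$ of these symmetries must be destroyed by the remaining $O(1)$ connection sets, which you never specify. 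Automorphisms not induced by group automorphisms must be excluded as well.
\item Between layers $0$ and $1$, the column edges $(x,0)(x,1)$ coexist with the arc edges $(x,0)(xs_i,1)$. A layer-preserving automorphism therefore need not preserve the column matching. The common permutation $\pi$ does not follow without an invariant that separates columns from arcs.
\item Recovering the generator classes ``by order'' from short-cycle counts is sensitive to coincidences among the $s_i$, their inverses and the inserted products. Such coincidences create unintended short cycles, and ruling them out for every group is exactly the case analysis a real proof must perform.
\end{itemize}
Either supply that construction and its verification in full, or do as the paper does and cite the result.
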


The theorem yields graphs as ``black boxes'' without explicit structural properties. To obtain a suitable placement for the free edges, we introduce the following gadget.

\begin{defn}\label{def:key}
    For any $\ell\in \NN$, the gadget $\textsc{Key}_\ell$ has $V(\textsc{Key}_\ell):=\{x,y_1,\dots,y_6,z\} \cup\{a_{i},b_i,c_i\mid i\in[2\ell]\}$. For clarity, the edge set is defined by the adjacency relations shown in \Cref{fig:key}.
    For the free edges, we set $K:=\binom{C}{2}$ where $C=\{c_{2i-1} \mid i\in[\ell] \}$.
\end{defn}

\begin{figure}[t]
    \centering
    \begin{tikzpicture}[scale=0.18, vertex style/.style={
    fill=black,
    circle,
    minimum size=1.2mm,
    inner sep=0pt,
    outer sep=1.5pt
}]

\scriptsize
\begin{scope}[rotate=90]

    \begin{scope}
        \tikzset{key_bg/.style={fill=gray!40, even odd rule, rounded corners=2pt}}
        
        \fill[key_bg] (0, -0.5) circle (5.8) (0, -0.5) circle (2.8);
        
        \fill[key_bg] (-0.8, -6) rectangle (0.8, -46);
        
        \foreach \i in {1, 2, 3, 4, 5 , 6, 7, 8}{
            \fill[key_bg] (-0.8, -\i*5-3.5) rectangle (4.8, -\i*5-6.5);
        }
        
        \fill[key_bg] (0, -46) circle (0.8);
    \end{scope}

    \def \square {5}
    
    \draw (-1,\square+1) node{\textcolor{black}{$x$}};
    \path (0,\square) coordinate[vertex style, line width = 0.8pt] (x);
    \draw (-\square-1,0+1) node{\textcolor{black}{$y_1$}};
    \path (-\square,0) coordinate[vertex style] (y1);
    \draw (0-1,0+1) node{\textcolor{black}{$y_2$}};
    \path (0,0) coordinate[vertex style] (y2);
    \draw (1.5-0.5,1.5+1) node{\textcolor{black}{$y_3$}};
    \path (1.5,1.5) coordinate[vertex style] (y3);
    \draw (1.5-0.5,-1.5-1) node{\textcolor{black}{$y_4$}};
    \path (1.5,-1.5) coordinate[vertex style] (y4);
    \draw (\square+1,0+1) node{\textcolor{black}{$y_5$}};
    \path (\square,0) coordinate[vertex style] (y5);
    \draw (0-1,-\square-1) node{\textcolor{black}{$y_6$}};
    \path (0,-\square) coordinate[vertex style] (y6);

    \foreach \i/\label in {1/1, 2/2, 3/3, 4/4, 7/{\ell-1}, 8/\ell}{
        \pgfmathsetmacro{\ypos}{-\square - \square*\i}
        
        \node[below right=0.6pt] at (0,\ypos) {$a_{\label}$};
        \path (0,\ypos) coordinate[vertex style] (a\i);
        
        \node[below right=0.6pt] at (2,\ypos) {$b_{\label}$};
        \path (2,\ypos) coordinate[vertex style] (b\i);
        
        \node[below right=0.6pt] at (4,\ypos) {$c_{\label}$};
        \path (4,\ypos) coordinate[vertex style] (c\i);
        
        \draw[thick] (a\i)--(b\i)--(c\i);
    }

    \node at (0, -\square-5.5*\square) {$\dots$};
    \node at (2, -\square-5.5*\square) {$\dots$};
    \node at (4, -\square-5.5*\square) {$\dots$};
     
    \draw (0-1,-\square-9*\square+1) node{\textcolor{black}{$z$}};
    \path (0,-\square-9*\square) coordinate[vertex style] (z);

    \draw[thick] (x)--(y1)--(y6)--(y5)--(y3)--(y4)--(y2)--(x);
    \draw[thick] (y5)--(x);
    \draw[thick] (y6)--(y2)--(y3); 

    \draw[thick] (y6)--(a1);
    \draw[thick] (a1)--(a2)--(a3)--(a4);
    \draw[thick] (a4)-- (0, -\square-4.8*\square);
    \draw[thick] (0, -\square-6.2*\square) -- (a7);
    \draw[thick] (a7)--(a8)--(z); 

    \draw[thick] (c1)--(c2)--(c3)--(c4);
    \draw[thick] (4, -\square-4.8*\square) -- (c4);
    \draw[thick] (4, -\square-6.2*\square) -- (c7);
    \draw[thick] (c7)--(c8);

    \draw[thick,dashed,color=red] (c1) to [bend left=30] (c3);
    \draw[thick,dashed,color=red] (c1) to [bend left=30] (c7);
    \draw[thick,dashed,color=red] (c3) to [bend left=30] (c7);
    
    \draw[thick,dashed,color=red] (c3) to [bend left=30] (4, -\square-5*\square);
    \draw[thick,dashed,color=red] (c1) to [bend left=30] (4, -\square-5*\square);
    \draw[thick,dashed,color=red] (4, -\square-5*\square) to [bend left=30] (c7);

\end{scope}
\end{tikzpicture}
    
    \caption{The key gadget of order $\ell$. The figure depicts only the first and last segments of the edge families $\{a_{i}a_{i+1},c_ic_{i+1} \mid i\in[2\ell-1]\} \cup \{a_ib_i,b_ic_i \mid i\in [2\ell] \}$. All remaining non-clique edges are drawn explicitly.
    The red dashed edges represent the edges of the clique induced by the vertices $c_{2i-1}$; these edges form $K$ in Lemma~\ref{lem:general}. }
    \label{fig:key}
\end{figure}

Then, we define the graph $F_{Q,\ell}$ that forms a clique scaffold for any $\ell\in \mathbb N$. We replace each vertex $u$ in $F_Q$ with a copy of the gadget $\textsc{Key}_\ell$ by identifying $u$ with $x$ in $\textsc{Key}_\ell$ and keep the edges $uv$ in $F_Q$ unchanged. Denote the obtained graph by $F_{Q,\ell}$ and denote the vertices in the copy of $\textsc{Key}_\ell$ for $u$ by $x_u,y_{u,i},a_{u,i},b_{u,i},c_{u,i}$ and $z_u$ respectively; see Figure~\ref{fig:keyC_3}. For any $u_0\in V(F_Q)$, the graph $F_{Q,\ell}$ forms a $(\ell,O(\ell))$-clique scaffold if we relabel vertices $c_{u_0,2i-1}$ for $i\in[\ell]$ with $1,\dots,\ell$.

The following lemma shows that attaching keys does not change the automorphism group.

\begin{figure}[t]
    \centering
    \begin{tikzpicture}[
    vertex style/.style={circle, minimum size=2.2mm, inner sep=0pt, outer sep = 1.5pt, fill=black, draw = white, line width = 1.5pt},
    key style/.style={fill=gray!40, even odd rule, rounded corners=1pt}
]

\tikzset{
        key_gadget/.pic={
            \fill[key style] (0.6,0) circle (0.6) (0.6,0) circle (0.3);
            
            \fill[key style] (1.15, -0.17) rectangle (2.85, 0.17);
            
            \fill[key style, rounded corners=1.5pt] (1.7, 0.10) rectangle (2.0, 0.4);
            \fill[key style, rounded corners=2pt] (2.1, 0.10) rectangle (2.4, 0.4);
            \fill[key style] (2.85, 0) circle (0.17);
            \node[vertex style] (x) at (0, 0) {};
        }
    }

    \begin{scope}[shift={(-1.5,0)}]
        \coordinate (L1) at (90:1.5);
        \coordinate (L2) at (210:1.5);
        \coordinate (L3) at (330:1.5);
        \node at (98:1.7) {$v_1$};
        \node at (98+120:1.7) {$v_2$};
        \node at (98+240:1.7) {$v_3$};
        
        \draw[thick] (L1) -- (L2) -- (L3) -- cycle;
        \foreach \p in {L1, L2, L3} \node[vertex style] at (\p) {};
    \end{scope}

    \draw[->, line width=2pt, >=stealth] (1, 0) -- (3, 0);

    \begin{scope}[shift={(5.5,0)}]
        
        \coordinate (R1) at (90:1.5);
        \coordinate (R2) at (210:1.5);
        \coordinate (R3) at (330:1.5);
        
        \foreach \ang in {90, 210, 330} {
            \pic at (\ang:1.5) [rotate=\ang-125] {key_gadget};
        }
        
        \draw[thick] (R1) -- (R2) -- (R3) -- cycle;
        
        \foreach \p in {R1, R2, R3} \node[vertex style] at (\p) {};
        \node at (98:1.75) {$x_{v_1}$};
        \node at (98+120:1.75) {$x_{v_2}$};
        \node at (98+240:1.75) {$x_{v_3}$};
    \end{scope}

\end{tikzpicture}
    \caption{An illustration of the graph $F_{Q,\ell}$ where $Q$ is the symmetric group $S_3$. The graph $F_{Q}$ realizing $Q$ is the complete graph on 3 vertices shown on the left. The graph $F_{Q,\ell}$ is obtained by adding keys and is shown on the right.}
    \label{fig:keyC_3}
\end{figure}

\begin{lem}\label{lem:generalempty}
    For any finite group $Q$ and $\ell \in \NN$, the automorphism group of the graph $F_{Q,\ell}$ is isomorphic to $Q$.
\end{lem}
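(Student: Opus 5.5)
The plan is to show that restriction to the vertices $\{x_u \mid u\in V(F_Q)\}$ is a group isomorphism $\Aut(F_{Q,\ell})\to\Aut(F_Q)$. We then conclude with Frucht's Theorem (\Cref{thm:fru}), which gives $\Aut(F_Q)\cong Q$. There are three steps: (i) every automorphism $f$ of $F_{Q,\ell}$ maps each key copy onto a key copy, with $x_u\mapsto x_{v}$ for some $v$; (ii) an automorphism of the key gadget that fixes $x$ is the identity, so $f$ is determined by its action on the $x_u$'s; (iii) every automorphism of $F_Q$ extends to $F_{Q,\ell}$.

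\textbf{Step (iii)} is immediate. Given $\pi\in\Aut(F_Q)$, map every gadget vertex of copy $u$ to the same-named vertex of copy $\pi(u)$. Edges inside gadgets are preserved by construction. The only edges between different copies are the edges $x_ux_v$ with $uv\in E(F_Q)$, and these are preserved because $\pi$ is an automorphism of $F_Q$. For the same reason, the induced subgraph on $\{x_u\}$ is exactly $F_Q$, so once (i) holds, the restriction of $f$ is an automorphism of $F_Q$. The restriction map is then a homomorphism, (iii) makes it surjective, and (ii) makes it injective.

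\textbf{Steps (i) and (ii)} I plan to prove together by a degree-based ``peeling'' argument, in the style of the column induction in \Cref{lem:asymmetry}. I start from the pendant vertices.
\begin{itemize}
\item Every $x_u$ has degree at least $3$ because of $y_1,y_2,y_5$, and every other gadget vertex has degree at least $2$ except $z$. Hence the vertices of degree $1$ are exactly the $z_u$, and $f$ permutes them, say $z_u\mapsto z_v$.
\item The unique neighbor $a_{2\ell}$ of $z_u$ must map to $a_{v,2\ell}$.
\item Inductively along the $a$-path, suppose $a_{u,i}$ and its successor towards $z_u$ are already matched. The two remaining neighbors of $a_{u,i}$ are $b_{u,i}$, which has degree $2$, and the next path vertex, which is $a_{u,i-1}$ of degree $3$, or $y_{u,6}$ of degree $4$ when $i=1$. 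These degrees separate the two neighbors, so each is matched.
\item Each $b_{u,i}$ then fixes $c_{u,i}$ as its other neighbor.
\item Finally, starting from $y_6$, the head $\{x,y_1,\dots,y_6\}$ is matched. I will use neighborhood intersections and degrees, as in the proof of \Cref{lem:asymmetry}: $y_1$ is the degree-$2$ neighbor of $y_6$; $y_5$ and $y_2$ are separated by whether they lie on the triangle $y_2y_3y_4$ (only $y_2$ does); $y_3,y_4$ follow; and $x$ is the common neighbor of $y_1,y_2,y_5$.
\end{itemize}
This shows that $f$ maps copy $u$ onto copy $v$ by the ``same-name'' map. In particular $x_u\mapsto x_v$, which is (i). Applying the argument with $u=v$ and $x$ fixed shows that the only automorphism of the gadget fixing $x$ is the identity, which is (ii).

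\textbf{Main obstacle.} The step I expect to be delicate is the head, because $x_u$ additionally carries the unknown edges of the black-box graph $F_Q$. I must make sure that every identification uses only neighborhoods of vertices other than $x$, which are entirely inside the gadget, or uses intersections that are unaffected by the edges of $F_Q$. The first thing I will do is tabulate the exact degrees of $y_1,\dots,y_6$ from \Cref{fig:key}, namely $3,4,3,2,3,4$ before the attachment to $a_1$, to confirm that each step of the head identification is forced.
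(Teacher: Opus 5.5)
Your proposal is correct and is essentially the paper's argument. The paper likewise identifies the $z_u$ as the only pendant vertices and forces the same-name map on each key by unique-degree peeling along $z_u,a_{u,2\ell},b_{u,2\ell},\dots,a_{u,1},b_{u,1},y_{u,6},y_{u,1},y_{u,2}$; it then reads off an automorphism of $F_Q$ from the $x_u$, phrased as surjectivity of the extension map $f\mapsto f'$ rather than bijectivity of restriction. One small slip: counting the edge $y_6a_1$, the degrees of $y_1,\dots,y_6$ are $2,4,3,2,3,4$, not $3,4,3,2,3,4$. So $y_1,y_5,y_2$ are already separated by degree as neighbors of $y_6$, and your (valid) triangle test is unnecessary.
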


\begin{proof}
    Any automorphism $f$ of $F_{Q}$ induces an automorphism $f'$ of $F_{Q,\ell}$, defined by
    \[f'(y_{u,i}) := y_{f(u),i}\] for all $u\in V(F_Q), i\in[6]$ and similarly for $x_u,a_{u,j},b_{u,j},c_{u,j}$ and $z_u$ for all applicable $u$ and $j$. It is clear that $f'$ is an automorphism of $F_{Q,\ell}$. It follows that $Q$ is isomorphic to a subgroup of $\textup{Aut}(F_{Q,\ell})$. It suffices to prove that the mapping $f\mapsto f'$ is surjective.

    Let $g\in \textup{Aut}(F_{Q,\ell})$ be an automorphism. Since the set of vertices of degree 1 in $F_{Q,\ell}$ is $V_1:=\{z_u \mid u \in V(F_Q)\}$, the automorphism $g$ preserves $V_1$, i.e., $g(V_1)=V_1$, which induces a bijection $f$ of $V(F_Q)$ by $u \mapsto v$ if $g(z_u)=z_v$. We proceed by showing that $f$ is an automorphism so that we can define $f'$ as before and, furthermore, $g=f'$.

    Let $g_1$ be the partial automorphism with domain $V_1$ obtained by restricting $g$ to $V_1$. We extend $g_1$ and prove $g$ is the unique automorphism that extends $g_1$ by the following claim.

    \vspace{0.2cm}
    \textbf{Claim.} Let $h:V(F_{Q,\ell})\to V(F_{Q,\ell})$ be a partial automorphism with domain $D$ and let $d\in D$ be such that the degrees of the neighbors of $d$ not in $D$ are unique, i.e.,
    \[
    \textup{deg}(e_1)\neq  \textup{deg}(e_2) \quad \text{for all } e_1,e_2\in N(d)\setminus D, e_1\neq e_2
    \]
    where $N(d)$ is the neighborhood of $d$. If there is an automorphism $p$ of $F_{Q,\ell}$ extending $h$, then there exists a unique partial automorphism $h'$ with domain $D\cup N(d)$ extending $h$ that can be further extended to an automorphism (not necessarily $p$).

    \textit{Proof of the claim.} By the existence of $p$ in the assumption, we only need to prove the uniqueness. If $h'$ extends $h$ and can be further extended to an automorphism, it must preserve degrees and adjacency. Looking at the undetermined neighbors $N(d)\setminus D$ of $d$, the degrees of their images are unique among the unmapped neighbors of $h(d)$. In particular, their images are uniquely determined, given by $h'(e) := p(e)$ for any $e\in N(d) \setminus D$. \hfill $\dashv$
    \vspace{0.2cm}

    By consecutively applying the claim to
    \[
    z_u,a_{u,2\ell},b_{u,2\ell},a_{u,2\ell-1},b_{u,2\ell-1},\dots,a_{u,1},b_{u,1},y_{u,6},y_{u,1},y_{u,2}
    \]
    as the vertex $d$ and starting with $g_1$ as the partial automorphism, one can verify that the only automorphism extending $g_1$, which is exactly $g$, maps
    \begin{equation}
    \label{eq:defg}
        \begin{array}{ccc}
    z_u & \quad & x_u \\
    \downarrow & & \downarrow \\
    z_v & \quad & x_v
    \end{array}
    \qquad
    \begin{array}{ccc}
    a_{u,i} & \quad & b_{u,i} \\
    \downarrow & & \downarrow \\
    a_{v,i} & \quad & b_{v,i}
    \end{array}
    \qquad
    \begin{array}{c}
    c_{u,i} \\
    \downarrow \\
    c_{v,i}
    \end{array}
    \qquad
    \begin{array}{c}
    y_{u,j} \\
    \downarrow \\
    y_{v,j}
    \end{array}
    \end{equation}
    for all $u\in V(F_{Q})$ and all $i\in[2\ell],j\in[6]$ where $v\in V(F_{Q})$ is the vertex satisfying $g_1(z_u)=z_v$.

    For $u_1,u_2\in V(F_Q)$, 
    \begin{align*}
        f(u_1)f(u_2)\in E(F_Q) &\iff x_{f(u_1)}x_{f(u_2)}\in E(F_{Q,\ell}) & \text{(definition of $F_{Q,\ell}$)}\\
        &\iff g(x_{u_1})g(x_{u_2})\in E(F_{Q,\ell}) &\text{($g(x_u)=x_{f(u)}$)}\\
        &\iff x_{u_1}x_{u_2} \in E(F_{Q,\ell})&\text{($g$ is automorphism)}\\
        &\iff u_1u_2\in E(F_Q),&\text{(definition of $F_{Q,\ell}$)}
    \end{align*}
    where $g(x_u)=x_{f(u)}$ follows from the definition of $f$ and Equation~(\ref{eq:defg}).

    Then $f$ is an automorphism of $F_Q$ and we can define $f'$. However, by Equation~(\ref{eq:defg}), it is obvious that $f'=g$ and thus the mapping $f\mapsto f'$ is surjective.
\end{proof}

We find a large useful independent set in $F_{Q,\ell}$ for any finite group $Q$.

\begin{lem}\label{lem:general}
    Let $Q$ be a non-trivial finite group and $\ell\in \NN  $. There is a graph $F=(V,E)$ on $O(|Q|\cdot \ell)$ vertices and an independent set $C$ of size $\ell$ such that for any $X\subseteq \binom{C}{2}$, the automorphism group of $F':=F\cup X$ is isomorphic to $Q$ if and only if $X=\emptyset$.
\end{lem}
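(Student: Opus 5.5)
Take $F:=F_{Q,\ell}$ and $C:=\{c_{u_0,2i-1}\mid i\in[\ell]\}$ for a fixed $u_0\in V(F_Q)$. By \Cref{thm:fru}, $|V(F_Q)|=O(|Q|)$, and each key has $O(\ell)$ vertices, so $|V(F)|=O(|Q|\cdot\ell)$. The $c$-vertices of a key form a path whose odd-indexed vertices are pairwise non-adjacent, so $C$ is independent. For $X=\emptyset$, \Cref{lem:generalempty} gives $\Aut(F)\cong Q$. It remains to show that for $X\neq\emptyset$ we have $\Aut(F\cup X)\not\cong Q$. Since $Q$ is finite, it suffices to show $\Aut(F\cup X)$ is isomorphic to a \emph{proper} subgroup of $\Aut(F)$, because a finite group is not isomorphic to a proper subgroup of itself.

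\textbf{Step 1: every automorphism of $F\cup X$ is an automorphism of $F$.} Adding $X$ only raises degrees of vertices in $C$; all other vertices keep their degrees, and in particular the pendant vertices $z_u$ are still the only degree-$1$ vertices. Let $g\in\Aut(F\cup X)$. I would rerun the propagation from the proof of \Cref{lem:generalempty}, applying the Claim along $z_u,a_{u,2\ell},b_{u,2\ell},\dots,a_{u,1},b_{u,1},y_{u,6},y_{u,1},y_{u,2}$. This shows that $g$ maps each key of $u$ onto the key of $v$ with $g(z_u)=z_v$, index by index, i.e., as in Equation~(\ref{eq:defg}). The $a$-path and $y$-part carry no free edges, so degrees along them are unchanged. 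The $c_{u,i}$ are then pinned down as the unique remaining neighbours of the already-fixed $b_{u,i}$ (each $b$ has degree $2$), so changed degrees on the $c$'s do not matter. The $x_u$ are pinned via the $y$'s.

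Hence $g$ has the form $f'$ for a bijection $f$ of $V(F_Q)$, and $f$ preserves the edges $x_ux_v$, which are untouched by $X$. So $f\in\Aut(F_Q)$, and $g=f'\in\Aut(F)$. The main obstacle is this step: checking that the degree-uniqueness hypothesis of the Claim survives the increased degrees of the $c_{u_0,2i-1}$. I plan to handle it by ordering the propagation through the $a$- and $b$-vertices, never branching from a $c$-vertex.

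\textbf{Step 2: the inclusion is proper.} Pick an edge $c_{u_0,2i-1}c_{u_0,2j-1}\in X$. Each $f'\in\Aut(F\cup X)$ must preserve this edge, and $f'$ maps it to $c_{f(u_0),2i-1}c_{f(u_0),2j-1}$. That image is an edge of $F\cup X$ only if $f(u_0)=u_0$, since for $u\neq u_0$ the key of $u$ has no free edges and $2i-1,2j-1$ are non-adjacent path positions. Therefore $\Aut(F\cup X)$ lies inside the stabiliser of $u_0$, transported to $F$.

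\textbf{Step 3: choice of $u_0$.} The stabiliser of $u_0$ must be proper, so $u_0$ should not be a fixed point of all of $\Aut(F_Q)$. Since $Q$ is non-trivial, some automorphism moves some vertex $u_0$; I choose $u_0$ this way. Then $\Aut(F\cup X)$ is isomorphic to a proper subgroup of $Q$, hence $\Aut(F\cup X)\not\cong Q$ by finiteness.
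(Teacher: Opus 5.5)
Your proposal is correct and follows the paper's route. You take $F=F_{Q,\ell}$ with $C=\{c_{u_0,2i-1}\mid i\in[\ell]\}$ for some $u_0$ moved by an automorphism of $F_Q$. You rerun the degree propagation of Lemma~\ref{lem:generalempty} in $F\cup X$ to get $\Aut(F\cup X)\subseteq\{f' \mid f\in\Aut(F_Q)\}$, and you note that preserving a free edge forces $f(u_0)=u_0$, so $\Aut(F\cup X)$ lands in a proper subgroup of $Q$. Your explicit checks fill in what the paper only calls ``a similar argument'': the propagation never branches from a $c$-vertex, so the raised degrees on $C$ are harmless, and a free edge cannot map into another key because odd-indexed $c$-vertices there are non-adjacent.
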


\begin{proof}
    By Theorem~\ref{thm:fru}, there is a graph $F_Q$ on $O(|Q|)$ vertices with automorphism group isomorphic to $Q$. Let $u_0\in V(F_Q)$ be an arbitrary vertex with an orbit of size at least 2, that is, there is an automorphism $g\in \textup{Aut}(F_Q)$ such that $g(u_0)\neq u_0$. Such a vertex exists since $Q$ is non-trivial. We take $F$ to be the graph $F_{Q,\ell}$ defined above and the independent set $C$ to be
    $\{c_{u_0,2i-1} \mid i\in[\ell]\}$.

    By Lemma~\ref{lem:generalempty}, $\textup{Aut}(F')\cong Q$ if $X=\emptyset$. When $X$ is not empty, it can be proved that $\textup{Aut}(F')$ is isomorphic to a subgroup of $Q$ by a similar argument as in the proof of Lemma~\ref{lem:generalempty} and an automorphism $f$ of $F_{Q,\ell}$ is an automorphism of $F'$ if and only if $f(z_{u_0})=z_{u_0}$. Since the size of the orbit of $u_0$ is at least 2, $\textup{Aut}(F')$ is isomorphic to a proper subgroup of $Q$, which completes the proof.
\end{proof}

It is also natural to consider the complexity of counting subgraphs whose automorphism group belongs to a given collection of groups $\mathcal Q$. As anticipated, Lemma~\ref{lem:general} can be generalized to any collection of groups in the following two corollaries, which in turn shows the corresponding hardness of the counting problem.

\begin{cor}\label{cor:not_contain_trivial}
    Let $\mathcal Q$ be any fixed collection of finite groups such that $0<|\mathcal Q|<\infty$ and $S_1\notin \mathcal Q$. For every $\ell\in \NN$, there is an $(\ell,k)$-clique scaffold $R$ where $k:=O(\ell)$ with an independent set $C$ such that for any $X\subseteq \binom{C}{2}$, the automorphism group of $R':=R\cup X$ is in $\mathcal Q$ if and only if $X=\emptyset$, that is, the property $\neg \Psi_{\mathcal Q}^{(k)}$ contains an $\ell$-clique OR witnessed by $R$.
\end{cor}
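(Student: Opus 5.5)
The plan is to reduce to Lemma~\ref{lem:general} by choosing the right group from $\mathcal Q$. Since $0<|\mathcal Q|<\infty$, I pick $Q\in\mathcal Q$ of \emph{minimal order} among the groups in $\mathcal Q$. Because $S_1\notin\mathcal Q$, this $Q$ is non-trivial, so Lemma~\ref{lem:general} applies to it.

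For fixed $\ell$, let $F=F_{Q,\ell}$ and $C=\{c_{u_0,2i-1}\mid i\in[\ell]\}$ be the graph and independent set given by Lemma~\ref{lem:general}. Set $R:=F$ and $k:=|V(F)|=O(|Q|\cdot\ell)$. Since $\mathcal Q$ is fixed, $|Q|$ is a constant, so $k=O(\ell)$. The set $C$ is independent in $F$, because the $c$-vertices only form the path $c_{u_0,1}c_{u_0,2}\cdots$ and we use only odd indices. After relabeling $C$ as $1,\dots,\ell$ and the remaining vertices as $\ell+1,\dots,k$, the graph $R$ is an $(\ell,k)$-clique scaffold.

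It remains to verify the equivalence for every $X\subseteq\binom{C}{2}$. If $X=\emptyset$, then $\Aut(R)\cong Q\in\mathcal Q$ by Lemma~\ref{lem:generalempty}. If $X\neq\emptyset$, the proof of Lemma~\ref{lem:general} shows that $\Aut(R\cup X)$ is isomorphic to a \emph{proper} subgroup of $Q$, so its order is strictly less than $|Q|$. By the minimality of $|Q|$, every group in $\mathcal Q$ has order at least $|Q|$, so $\Aut(R\cup X)$ is not isomorphic to any member of $\mathcal Q$. Hence $\Psi_{\mathcal Q}(R\cup X)=1$ exactly when $X=\emptyset$. Equivalently, $\neg\Psi^{(k)}_{\mathcal Q}(R)=0$ and $\neg\Psi^{(k)}_{\mathcal Q}(R\cup X)=1$ for all $\emptyset\subsetneq X\subseteq\binom{C}{2}$, which is the definition of an $\ell$-clique OR in $\neg\Psi^{(k)}_{\mathcal Q}$.

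The main obstacle is making sure that a proper subgroup of the chosen $Q$ does not land back in $\mathcal Q$. The minimal-order choice handles this, and the hypothesis $S_1\notin\mathcal Q$ is exactly what keeps the chosen $Q$ non-trivial. Everything else is a direct invocation of Lemma~\ref{lem:general} together with bookkeeping of the vertex count.
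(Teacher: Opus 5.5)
Your proof is correct and follows the paper's route: reduce to Lemma~\ref{lem:general} for a suitably minimal $Q\in\mathcal Q$, and use the fact from its proof that $\Aut(R\cup X)$ is isomorphic to a \emph{proper} subgroup of $Q$ whenever $X\neq\emptyset$. The only difference is how $Q$ is chosen. You take $Q$ of minimum order, whereas the paper repeatedly replaces $Q$ by a proper subgroup lying in $\mathcal Q$ until none remains; both guarantee that no proper subgroup of $Q$ is isomorphic to a member of $\mathcal Q$, and yours is slightly cleaner.
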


\begin{proof}
    We apply Lemma~\ref{lem:general} with a suitable $Q\in \mathcal Q$: Take any $Q\in \mathcal Q$ and examine whether any proper subgroup of $Q$ is in $\mathcal Q$. If so, replace $Q$ with one of its proper subgroups that is in $\mathcal Q$ and repeat. Since $\mathcal Q$ is finite, this eventually halts. Then, we apply Lemma~\ref{lem:general} with this $Q$ and $\ell$. Now we have an $(\ell,O(\ell))$-clique scaffold $R:=F_{Q,\ell}$ and an independent set $C$ such that $\text{Aut}(F_{Q,\ell})\in \mathcal Q$ and $\text{Aut}(R\cup X) \notin \mathcal Q$ for any non-empty $X\subseteq \binom{C}{2}$, which completes the proof.
\end{proof}

By finding a minimal group not in $\mathcal Q$, we obtain an analog of Lemma~\ref{lem:asymmetry}.

\begin{cor}\label{cor:contain_trivial}
    Let $\mathcal Q$ be any fixed collection of finite groups such that $0<|\mathcal Q|<\infty$ and $S_1\in \mathcal Q$. For every $\ell \in \NN$, there is an $(\ell,k)$-clique scaffold $R$, where $k:=O(\ell)$, with an independent set $C$ such that for any $X\subseteq \binom{C}{2}$, the automorphism group of $R':=R\cup X$ is \textbf{not} in $\mathcal Q$ if and only if $X=\emptyset$, that is, the property $\Psi_{\mathcal Q}^{(k)}$ contains an $\ell$-clique OR witnessed by $R$.
\end{cor}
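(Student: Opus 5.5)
We pick a group $Q\notin\mathcal Q$ all of whose proper subgroups lie in $\mathcal Q$. Then we apply Lemma~\ref{lem:general} to this $Q$, with the roles of ``in'' and ``not in'' swapped relative to Corollary~\ref{cor:not_contain_trivial}.

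\emph{Choosing $Q$.} Since $\mathcal Q$ is finite, some finite group $Q_0$ is not isomorphic to any member of $\mathcal Q$; for instance, a cyclic group of order larger than every $|Q'|$ with $Q'\in\mathcal Q$. Starting from $Q:=Q_0$, we check whether some proper subgroup of $Q$ lies outside $\mathcal Q$. If so, we replace $Q$ by that subgroup and repeat. The order strictly decreases, so the process halts. At the end, $Q\notin\mathcal Q$ and every proper subgroup of $Q$ is (up to isomorphism) in $\mathcal Q$. Since $S_1\in\mathcal Q$, the final $Q$ is non-trivial. This is exactly the hypothesis needed for Lemma~\ref{lem:general}. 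The choice depends only on $\mathcal Q$, so it is a fixed constant and can be found by brute force.

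\emph{Applying the lemma.} Lemma~\ref{lem:general} gives the graph $R:=F_{Q,\ell}$ on $O(|Q|\ell)=O(\ell)$ vertices and the independent set $C=\{c_{u_0,2i-1}\mid i\in[\ell]\}$. After relabeling $C$ as $[\ell]$, $R$ is an $(\ell,k)$-clique scaffold with $k=|V(F_{Q,\ell})|=O(\ell)$. For $X=\emptyset$, Lemma~\ref{lem:generalempty} gives $\Aut(R)\cong Q\notin\mathcal Q$, so $\Psi_{\mathcal Q}(R)=0$. For $X\neq\emptyset$, we need the stronger content of the proof of Lemma~\ref{lem:general}, not merely ``$\Aut(R\cup X)\not\cong Q$''. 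That proof shows $\Aut(R\cup X)$ is isomorphic to a \emph{proper subgroup} of $Q$, namely the stabilizer of $z_{u_0}$ in $\Aut(F_{Q,\ell})$, which is proper since the orbit of $u_0$ has size at least $2$. By the minimality of $Q$, every proper subgroup is in $\mathcal Q$. Hence $\Psi_{\mathcal Q}(R\cup X)=1$ for all nonempty $X\subseteq\binom{C}{2}$. So $\Psi_{\mathcal Q}^{(k)}$ contains an $\ell$-clique OR witnessed by $R$.

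\emph{Main obstacle.} The delicate point is relying on the proper-subgroup conclusion rather than only non-isomorphism to $Q$. I would restate it explicitly: any automorphism of $R\cup X$ restricts to an automorphism of $F_{Q,\ell}$. This holds by rerunning the degree-propagation argument of Lemma~\ref{lem:generalempty}, which is unaffected by edges among the $c_{u_0,2i-1}$ because degree-$1$ vertices and the propagation order from each $z_u$ still identify every vertex. Such an automorphism must also fix $z_{u_0}$, the only key carrying extra edges. Therefore $\Aut(R\cup X)$ embeds into the stabilizer of $u_0$, a proper subgroup of $Q$. Any subgroup of that stabilizer is again a proper subgroup of $Q$, hence in $\mathcal Q$ by minimality. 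The argument never needs the embedding to be onto the stabilizer.
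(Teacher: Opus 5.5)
Your proposal is correct and matches the paper's argument, which is given only as the one-line remark about taking a minimal group not in $\mathcal Q$: choose $Q\notin\mathcal Q$ with all proper subgroups in $\mathcal Q$ (nontrivial since $S_1\in\mathcal Q$), and apply Lemma~\ref{lem:general}, so that $\Aut(F_{Q,\ell})\cong Q$ while every nonempty $X$ yields a proper subgroup of $Q$, which lies in $\mathcal Q$. You are also right that this needs the stronger conclusion from the proof of Lemma~\ref{lem:general} (automorphisms of $F_{Q,\ell}\cup X$ are automorphisms of $F_{Q,\ell}$ fixing $z_{u_0}$), not just the lemma's stated non-isomorphism to $Q$.
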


Then the main reduction from Section~\ref{sec:main-reduction} can be applied straightforwardly.

\begin{proof}[Proof of Theorem~\ref{thm:main-groupcollection}]
    By Corollary~\ref{cor:not_contain_trivial} and Corollary~\ref{cor:contain_trivial}, for every $\ell\in \NN$, there exists $k=O(\ell)$ such that either $\Psi_{\mathcal Q}^{(k)}$ or $\neg \Psi_{\mathcal Q}^{(k)}$ contains an $\ell$-clique OR, depending on whether $S_1 \in \mathcal Q$. The theorem follows by applying Theorem~\ref{thm:mainreduction-w}.
\end{proof}

\bibliography{reference}

\appendix
\section{Edge-densities of Asymmetric Graphs}

We show that there are only $O(k)$ distinct edge-counts $m$ such that \emph{every} $k$-vertex $m$-edge graph is symmetric.

\begin{lem}\label{lem:avoid}
    For sufficiently large $k\in \NN$ and any $2k\le m \le \binom{k}{2} - 2k$, there exists an asymmetric graph on $k$ vertices and $m$ edges.
\end{lem}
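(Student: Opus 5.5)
We construct, for every $m$ in the range, an explicit asymmetric graph by starting from a rigid asymmetric ``skeleton'' on few edges and then adding edges in a controlled way. By complementation it suffices to handle $2k \le m \le \frac12\binom{k}{2}$ (or any range covering at least half the interval). This works because $\Aut(H)=\Aut(\overline H)$, and because $\overline H$ has $\binom{k}{2}-m$ edges, which lies in $[2k,\binom k2-2k]$ exactly when $m$ does.

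For the sparse end we use a fixed asymmetric tree-like skeleton. Take a path $P = p_1p_2\cdots p_k$ and attach a small asymmetric marker near one end; for instance, add the chord $p_1p_3$ so that $p_1$ and $p_2$ become distinguishable from the other end. This gives an asymmetric graph with about $k$ edges. The key device is that vertices can then be individualized by degree. We plan to choose the vertex degrees to be pairwise distinct on a large part of the graph, which forces automorphisms to fix those vertices.

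The concrete plan is a ``threshold/staircase'' construction. Order the vertices $v_1,\dots,v_k$ and let $v_i$ be adjacent to $v_j$ for $j<i$ according to a nested pattern, in the spirit of a threshold graph. Threshold graphs with distinct degrees are determined up to swaps of vertices of equal degree. So we aim for a graph in which all degrees are distinct except for a bounded number of coincidences, and we break those coincidences with a few extra edges. Since exactly distinct degrees are impossible (some degree must repeat), we instead use two layers. A backbone is formed by the path with marker, which is asymmetric on its own. On top of it we add a set $Y$ of extra edges, chosen so that the degree sequence in $P\cup Y$ still lets us recover the path order.

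A simpler option is to adapt the approach of \Cref{lem:asymmetry}: keep a rigid asymmetric subgraph $A$ on $O(1)\cdot k$ edges in which every vertex is pinned down by its distance to a unique degree-$1$ vertex. Then add the remaining $m-|E(A)|$ edges among vertices that are already fixed, while making sure the added edges do not create new symmetry. To make this robust, we take the rigid part to be a path $p_1\cdots p_k$ with $p_1$ the unique vertex of a special degree. We add the extra edges greedily as ``long chords'' $p_ip_j$ in lexicographic order of $(i,j)$ with $j\ge i+2$, leaving $p_1$ untouched. We then check that $p_1$ remains identifiable. One way is to give $p_1$ degree $1$ and to ensure that no other vertex has degree $1$; the latter holds once $m\ge 2k$ and the chords cover every vertex. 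Then every automorphism fixes $p_1$.

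The main obstacle is showing that fixing $p_1$ propagates along the whole path once chords are present, because the path edges are no longer the only edges. We would handle this with an initial-segment structure. If the chord set is a lexicographic initial segment, then $G[\{p_1,\dots,p_i\}]$ is determined in a monotone way, and neighborhoods $N(p_i)\cap\{p_1,\dots,p_{i-1}\}$ distinguish $p_i$ inductively. This mirrors the column induction in \Cref{lem:asymmetry}: assuming $p_1,\dots,p_{i-1}$ are fixed, $p_i$ should be the unique unfixed vertex with a prescribed neighborhood in the fixed set. Uniqueness may fail for the last $O(1)$ vertices or in degenerate parameter ranges, and the slack of $2k$ at both ends of the range is meant to absorb this by small local adjustments, for example moving a single chord. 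Finally, complementation covers the dense half of the range.
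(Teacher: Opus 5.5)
There is a genuine gap. The construction you commit to is the path $p_1\cdots p_k$ plus chords $p_ip_j$ with $i\ge 2$, $j\ge i+2$, added in lexicographic order. It is not asymmetric for most of the range, and the failure is not a boundary effect.

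Row $2$ is complete as soon as $m\ge 2k-4$. From then on, with $p_1,p_2$ fixed, every one of $p_3,\dots,p_k$ has the same trace $\{p_2\}$ on the fixed set. So your induction already stalls at $i=3$, not at the last $O(1)$ vertices.

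Worse, suppose rows $2,\dots,s-1$ are complete and row $s$ is partial. Take any $3\le i\le s-1$. The path joins $p_i$ to $p_{i\pm1}$, its own row joins it to all $p_j$ with $j\ge i+2$, and earlier rows join it to all $p_h$ with $2\le h\le i-2$. It is never joined to $p_1$. Hence $N[p_i]=\{p_2,\dots,p_k\}$ for all these $i$, so $p_3,\dots,p_{s-1}$ are pairwise true twins and $\Aut$ contains the full symmetric group on them.

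Row $4$ is completed at $m=4k-13$, so this happens for essentially all $m\ge 4k$. Near $m=\frac12\binom k2$ one has $s=\Theta(k)$, so $|\Aut|\ge (s-3)!$. No ``small local adjustment'' such as moving one chord can repair this. It is the same phenomenon you note for threshold graphs: nested neighborhoods produce twins. Incidentally, the marker chord $p_1p_3$ fails for the same reason, since it gives $N[p_1]=N[p_2]=\{p_1,p_2,p_3\}$.

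The missing idea is that the argument pinning down the vertices must be \emph{insensitive to which extra edges are present}. You need a base graph $F$ and a set $X$ of free edges such that $F\cup Y$ is asymmetric for every (nonempty) $Y\subseteq X$. Each vertex must be identified only through features that free edges cannot change. In your path, the identifying features (neighborhoods in the prefix) are exactly what the chords overwrite.

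After complementing, as you also do, the paper uses two such pairs:
\begin{itemize}
\item For $2k\le m\le k^2/51$ it takes the ladder $F_\ell$ of \Cref{lem:asymmetry}, with one rung subdivided to adjust $k\bmod 4$. Free edges among the top-row even vertices do not affect the pendant vertices or the identity $N((1,j-1))\cap N((2,j))=\{(1,j),(2,j-1)\}$.
\item For $2k\log k\le m\le\frac12\binom k2$ it uses an ``indexed clique'': a clique $I$ of size $\lceil\log k\rceil$ whose vertices carry pendant paths of pairwise distinct lengths. The remaining $t$ vertices are attached to pairwise distinct subsets of $I$ of size at least $2$, and all edges among them are free. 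Any automorphism fixes the paths, hence $I$, hence each remaining vertex by its trace on $I$, whatever the free edges are. This base costs $O(k\log k)$ edges and leaves $\binom t2>\frac12\binom k2$ free edges.
\end{itemize}
Two regimes are needed because the ladder offers only about $k^2/32$ free edges, while the indexed clique costs about $k\log k$ base edges.
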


\begin{proof}
    We assume without loss of generality that $m \leq \binom{k}{2}/2$: For larger $m$, we construct a graph whose complement is asymmetric; taking complements preserves asymmetry.
    
    In the following, we present two constructions, each addressing a certain range of edge densities. The lemma then follows by case distinction: Choose Construction~1 for $m \leq k^2/51$ and Construction~2 otherwise.

    \vspace{0.1cm}
    \textbf{Construction 1:}
    For $2k\le m \le k^2 / 51$, we use a modified ladder. Recall that the ladder $F_{\ell}$ consists of $4\ell$ vertices and $6\ell-3$ edges; by Lemma~\ref{lem:asymmetry}, it is asymmetric if any non-empty subset of free edges is added to the ladder.
    Let $k=4\ell+r$ for some $\ell$ and $0\le r \le 3$. To accommodate the vertex count, we replace the edge $\{(1,2\ell),(2,2\ell)\}$ with a path of length $r+1$ in $F_{\ell}$ when $r\neq 0$, which preserves the asymmetry, and then add arbitrary $m-(6\ell-3+r)>0$ free edges to obtain an asymmetric graph with the required number of edges. By the definition of $F_{\ell}$, there are at least $\binom{\ell}{2}\ge \binom{k/5}{2}\ge \frac{1}{51}k^2$ available free edges, which is sufficient for this range of $m$.
    
    \vspace{0.1cm}
    \textbf{Construction 2:} For $2k\log k\le m \le \binom{k}{2}/2$, we use an ``indexed clique''. We define a base graph $F$ together with a set of free edges $X$. Let $F$ contain a clique $I$ of size $\lceil \log k\rceil$, where each vertex is attached by a path of a unique length from $\{1,2,\dots,\lceil \log k\rceil\}$. The remaining $t := k - \lceil \log k \rceil - \lceil \log k \rceil(\lceil \log k \rceil+1)/2$ vertices, denoted by $v_1, \dots, v_t$, are joined to $I$ such that each $v_i$ possesses a unique neighborhood in $I$ of cardinality at least two. Formally:
    \begin{enumerate}
        \item $N(v_i) \cap I \neq N(v_j) \cap I$ for all $i \neq j$, and
        \item $|N(v_i) \cap I| \ge 2.$
    \end{enumerate}
    Let $X$ denote the set of all potential edges with both endpoints in $\{v_i\}_{i \in [t]}$.

    Let $F':=F\cup Y$ for any $Y\subseteq X$. Any automorphism $f$ of $F'$ must fix the attached paths pointwise, as the vertices of degree one are distinguished by their unique distances to the nearest vertex of degree at least three. Consequently, $f$ also fixes $I$ pointwise. It then follows that $f$ must map $v_i$ to $v_i$ due to their distinct neighborhoods in $I$. Hence, $f$ is the identity map, rendering $F'$ asymmetric.

    In this construction, the number of edges in $F$ is at most \[
    \binom{\lceil \log k\rceil}{2}+\frac{(\lceil \log k\rceil+1)(\lceil \log k\rceil+2)}{2}+k\lceil \log k\rceil \le 2k\log k.
    \]
    The number of available free edges is $|X| = \binom{t}{2} > \binom{k}{2}/2$ for sufficiently large $k$.
\end{proof}

\end{document}